%% file: main.tex
\documentclass[acmsmall,nonacm]{acmart}

\newif\ifpublish
\publishfalse  

\author{Rithwik Kerur}
\email{rkerur@ucsb.edu}
\affiliation{%
  \institution{University of California, Santa Barbara}
  \city{Santa Barbara}
  \state{CA}
  \country{USA}
}

\author{Divyakant Agrawal}
\email{divyagrawal@ucsb.edu}
\affiliation{%
  \institution{University of California, Santa Barbara}
  \city{Santa Barbara}
  \state{CA}
  \country{USA}
}

\author{Dahlia Malkhi}
\email{dahliamalkhi@ucsb.edu}
\affiliation{%
  \institution{University of California, Santa Barbara}
  \city{Santa Barbara}
  \state{CA}
  \country{USA}
}

\author{Michael K. Reiter}
\email{michael.reiter@duke.edu}
\affiliation{%
  \institution{Duke University}
  \city{Durham}
  \state{NC}
  \country{USA}
}

\author{Amit Wieder}
\email{wiederamit@gmail.com}
\affiliation{%
  \institution{University of Colorado, Boulder}
  \city{Boulder}
  \state{CO}
  \country{USA}
}
\input{macros}

\title{\sys: Elastic Asynchronous Information Dispersal with Post-Dissemination Pruning}

\begin{document}

\input{abstract}

\maketitle

\clearpage
\setcounter{page}{1}
\input{intoduction-Rithwik}
\input{background}
\input{problem}

\input{protocol}
\input{useCase}

\input{implementation}

\input{evaluation}

\input{RelatedWork}
\input{conclusion}

\newpage
\bibliographystyle{plain}
\bibliography{references}

\newpage
\input{Appendix/appendix}
\end{document}

%% file: macros.tex
\usepackage{graphicx} % Required for inserting images

\usepackage{float}
\usepackage{amssymb}   
\usepackage{xcolor}
\usepackage{xspace} 
\usepackage{comment}
\usepackage{amsmath}
\usepackage[normalem]{ulem}
\usepackage{subcaption}
\usepackage{wrapfig}
\usepackage{algorithm}
\usepackage[inline]{enumitem}
\usepackage{algpseudocode}

\newcommand{\sys}{eAID\xspace}
\newcommand{\sysConsensus}{eAID-KV\xspace}
\newcommand{\HRaft}{HRaft\xspace}
\newcommand{\CRaft}{CRaft\xspace}
\newcommand{\FlexRaft}{FlexRaft\xspace}
\newcommand{\FlexRaftIDA}{Proactive Encoding\xspace}
\newcommand{\KV}{key-value store\xspace}
\newcommand{\KVs}{key-value stores\xspace}
\newcommand{\LogEntry}{log entry\xspace}
\newcommand{\LogEntries}{log entries\xspace}
\newcommand{\RQ}{responsive quorum\xspace}
\newcommand{\RQcapital}{Responsive-quorum\xspace}

\newcommand{\omitit}[1]{}

\ifpublish
    \newcommand{\DM}[1]{}
    \newcommand{\MKR}[1]{}
    \newcommand{\TODO}[1]{}
    \newcommand{\Rithwik}[1]{}
\else
    \newcommand{\DM}[1]{{\color{blue} DM: #1}}
    \newcommand{\MKR}[1]{{\color{purple} MKR: #1}}
    \newcommand{\TODO}[1]{{\color{red} TODO: #1}}
    \newcommand{\Rithwik}[1]{{\color{red} Rithwik: #1}}
\fi

%% file: abstract.tex
\begin{abstract}
Spreading and storing erasure-coded data effectively in distributed systems is challenging in practical settings. Consider a system of
$N$ nodes designed to tolerate up to $F<N/2$ failures. The dissemination of erasure-coded information is typically designed to complete only after receiving messages from $(N-F)$ nodes, thereby preparing for the worst-case, but rare, scenario of $F$ failures. In steady state, the remaining $F$ nodes may in fact be healthy, but their resources are not counted. This leads to over-provisioning of storage for encoded data.

This paper introduces \textbf{\sys}, a novel \textit{elastic information dispersal} algorithm that addresses this conundrum through a two-stage approach.

First, the core protocol estimates the actual number $f$ of faulty nodes, rather than assuming the worst-case bound $F$. Dissemination completes quickly when messages are received from $(N-f)$ nodes, and more gradually when fewer nodes respond. This flexibility enables the system to tune storage efficiency against latency.
Second, after initial dissemination completes, \sys continues monitoring for additional responses. As responses arrive from up to $N$ nodes, the system prunes the information stored at responding nodes accordingly.

A key technique enabling this seamless elasticity is an agile encoding scheme that varies the number of disseminated fragments while keeping both fragment size and the recovery threshold $(F+1)$ fixed. Not only does this enable varying the number of disseminated fragments on the fly, it also allows nodes to discard encoded fragments autonomously. Crucially, this is achieved without maintaining complex metadata, without requiring nodes to reconstruct or re-encode information, and without global coordination for storage decisions.

We demonstrate the practicality of \sys by integrating it with a replicated \KV and evaluating it in network environments with unpredictable latencies. The results show that \sys improves overall performance while significantly reducing long-term storage consumption.
\end{abstract}

% cut-paste ready version of the abstract below here
\omitit{
Spreading and storing erasure-coded data effectively in distributed systems is challenging in practical settings. Consider a system of
$N$ nodes designed to tolerate up to $F<N/2$ failures. The dissemination of erasure-coded information is typically designed to complete only after receiving messages from $(N-F)$ nodes, thereby preparing for the worst-case, but rare, scenario of $F$ failures. In steady state, the remaining $F$ nodes may in fact be healthy, but their resources are not counted. This leads to over-provisioning of storage for encoded data.

This paper introduces \textbf{eAID}, a novel \textit{elastic information dispersal} algorithm that addresses this conundrum through a two-stage approach.

First, the core protocol estimates the actual number $f$ of faulty nodes, rather than assuming the worst-case bound $F$. Dissemination completes quickly when messages are received from $(N-f)$ nodes, and more gradually when fewer nodes respond. This flexibility enables the system to tune storage efficiency against latency.
Second, after initial dissemination completes, eAID continues monitoring for additional responses. As responses arrive from up to $N$ nodes, the system prunes the information stored at responding nodes accordingly.

A key technique enabling this seamless elasticity is an agile encoding scheme that varies the \textbf{number} of disseminated fragments while keeping both fragment size and the recovery threshold $(F+1)$ fixed. Not only does this enable varying the number of disseminated fragments on the fly, it also allows nodes to discard encoded fragments autonomously. Crucially, this is achieved without maintaining complex metadata, without requiring nodes to reconstruct or re-encode information, and without global coordination for storage decisions.

We demonstrate the practicality of eAID by integrating it with a replicated key-value store, and evaluating it in network environments with unpredictable latencies. The results show that eAID improves overall performance while significantly reducing long-term storage consumption.

}

%% file: intoduction-Rithwik.tex
\section{Introduction}
The challenge of reliably storing and disseminating information in distributed systems is often a trade-off between fault tolerance and resource efficiency. While replication provides high availability, it incurs significant storage and communication overheads. Information dispersal algorithms, pioneered by Rabin's IDA~\cite{Rabin}, offer a compelling alternative by splitting data into fragments using erasure codes (EC), allowing for reconstruction from a subset of fragments. In theory, IDA provides optimal redundancy. In practice, however, deploying it within critical infrastructure presents significant operational challenges. 

The primary difficulty lies not in the coding theory, as robust EC schemes are well-established \cite{LubyTransform, FountainCodes, Reed-Solomon}, but in handling unpredictable network latencies. 
In \textit{asynchronous} settings, IDA methods must complete disseminating encoded fragments to $N$ nodes in face of up to $F < N/2$ unresponsive ones. 
The problem is that the system cannot distinguish between a slow node and a crashed node. 

To address this challenge, prior IDA-based systems have proposed several adaptive dissemination strategies: 
falling back to full replication upon timer expiration, as in RS-Paxos~\cite{RS-Paxos} and \CRaft~\cite{CRaft}; 
re-sharing ``endangered'' fragments to responsive nodes, as in OceanStore~\cite{OceanStore} and \HRaft~\cite{HRaft}; 
and dynamically varying the encoding threshold, as in \FlexRaft~\cite{FlexRaft}.

While these approaches improve dissemination under unstable network conditions, they suffer from two fundamental limitations. First, preserving optimal long-term storage cost is operationally burdensome. Temporary transmission delays during dissemination may cause the system to increase redundancy, leaving nodes with excess encoded data even after network conditions stabilize. Restoring storage efficiency then requires reconstructing and re-disseminating information, and typically relies on centralized coordination to determine storage decisions across nodes.

Second, these approaches complicate recovery. To support adaptive coding, they maintain per-entry metadata describing the specific coding configuration used for each entry, and recovery must therefore apply entry-specific reconstruction procedures and parameters.

\medskip

In this paper, we introduce \textbf{\sys},
an elastic way to AID in Asynchronous Information Dispersal, addressing these shortcomings.
The core idea in \sys is to employ an erasure code scheme that varies the \textbf{number} of disseminated fragments while keeping both fragment size and the recovery threshold $(F+1)$ fixed. 
This technique fundamentally rethinks how erasure coding is managed in dynamic networks and enables two-prong, seamless elasticity:

\begin{enumerate}

\item 
The dissemination protocol estimates the actual number $f$ of faulty nodes, rather than assuming the worst-case bound $F$. Dissemination in \sys tunes the number of disseminated fragments on the fly based on the estimate $f$. It completes quickly when messages are received from $(N-f)$ nodes, and more gradually when fewer nodes respond. This flexibility enables the system to tune storage efficiency against latency.

\item 
After initial dissemination completes, \sys continues monitoring for additional responses. As responses arrive from up to $N$ nodes, the system prunes the information stored at responding nodes accordingly.
Note that the \sys coding strategy allows nodes to discard encoded fragments autonomously. Crucially, this is achieved without maintaining complex metadata, without requiring nodes to reconstruct or re-encode information, and without global coordination for storage decisions.
Thus, unlike prior works that focus exclusively on the dissemination phase, \sys is the first solution to incorporate \textit{post-dissemination pruning}.

\end{enumerate}

%Concretely, this elasticity comes from a fixed $(F+1,\,(F+1)\times(N-1))$ code. 
Concretely, this elasticity comes from a fixed $(k = (F+1),\,m = (F+1)\times(N-1))$ code\footnote{$k$ denotes the number of fragments required for reconstruction; $m$ the count of extra fragments.}. 
The leader draws from a large pool of distinct fragments and assigns more of them per node as its
failure estimate grows, so that $F+1$ fragments always survive among the responsive
nodes. When responses are slow, \sys can avoid a
retransmission round by overestimating failures and sending a few extra
fragments, then pruning the surplus as soon as acknowledgments arrive. This trades a
little dissemination bandwidth, which is typically abundant, for lower latency, which is
often critical, particularly for mission-critical data-center and cloud
systems~\cite{TailAtScale}. We detail these strategies in Section~\ref{sec:Solution}.

Because the recovery threshold is fixed, \sys can reconstruct every disseminated information item from any
$F+1$ fragments using the same procedure, regardless of how that item was disseminated or later pruned. Recovery is therefore uniform and needs none of the
per-entry coding metadata that other schemes must need to track.

We integrate \sys into a \KV and evaluate the resulting system, \sysConsensus, under
network settings with highly variable point-to-point latencies characteristic of shared cloud environments~\cite{Pingmesh2015}. \sys matches or exceeds the low-latency dissemination of prior protocols with no metadata overhead and no re-encoding, while \sysConsensus uniquely reclaims long-term storage as network health improves, incurring only a minimal and temporary storage increase during disruptions. By optimizing for the
critical path first and storage second, the evaluation substantiates our motivation for creating a
robust, and storage-efficient dissemination regime for a replicated \KV.

%% file: background.tex
\section{Background}
\label{sec:Background}

\subsection{Information Dispersal}

The information dispersal problem, originally formulated in Rabin's seminal IDA method~\cite{Rabin}, considers the following challenge: a leader has a message $M$ of size $B$ that it intends to disseminate among a set of $N$ nodes, while providing both data availability and storage efficiency. Specifically, the dispersal must guarantee that despite up to $F < N/2$ nodes crashing, the message $M$ can be successfully reconstructed from the information stored on nodes that haven't failed. 

Rather than sending each of the $N$ nodes a full replica of $M$, a key technique for storage efficiency is employing erasure codes. 
An erasure code is parameterized using two values, $(k, m)$: it encodes a message $M$ into $k+m$ code pieces referred to as \textit{fragments}, such that any $k$ fragments suffice to reconstruct $M$.
In the encoded information dispersal approach, the leader encodes $M$ via a $(k, m)$ erasure code and disseminates fragment(s) to nodes.
With a Reed-Solomon code~\cite{Reed-Solomon}, 
each fragment size is $B/k$, the total storage is $\frac{B}{k}\times (k+m)$, and the \textit{storage blowup} over $B$ is $\frac{(k+m)}{k}$. 
Other practical codes, such as Luby-Transform \cite{LubyTransform} or Fountain codes \cite{Digital-Fountains} have similar blowup characteristics. 

In Rabin's IDA, the parameters $(k,m)$ of the erasure code directly mirror the system parameters $N=2F+1$, with $k=F+1, m=F$. With this parameterization, there are $k+m = 2F+1 = N$ fragments, and a leader assigns exactly one fragment to each node. This assignment maintains data availability against up to $F$ node failures with a factor of $(2F+1)/(F+1) \approx 2$ storage blowup.

\begin{figure}[tb]
    \centering
    \includegraphics[width=0.7\linewidth]{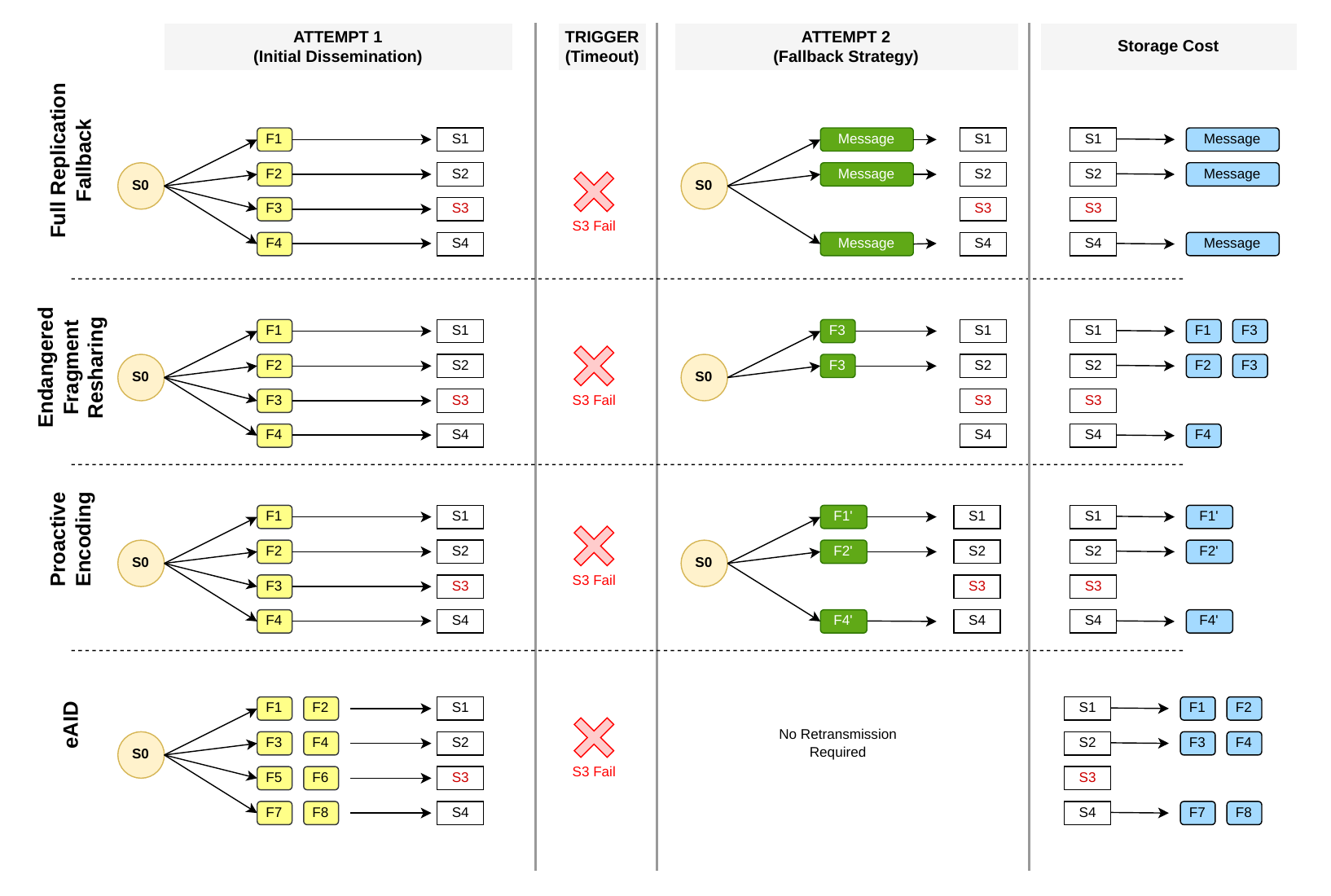} 
    \vspace{-1.5em} % 1. Reduces space between the image and the caption
    \caption{Scenario illustrating a single node failure (S3).
    In the Full Replication Fallback approach (top row), since the leader does not receive a response from S3, it reverts to full replication. In the Endangered Fragment Resharing approach (2nd row), the leader sends S3's fragment to $F=2$ randomly picked nodes. In the Proactive Encoding approach (3rd row), the leader re-encodes information using a $(F+1-1, F) = (2, 2)$ encoding scheme and disseminates new, larger fragments $\{F1', F2', F4'\}$ to the responsive nodes. 
    In \sys, the leader initially disseminates a tunable number ($2$ here) of distinct fragments to each node and no retransmission is required.
    }
    \label{fig:CodedRaft}
    \vspace{-1.75em} % 2. Reduces space between the caption and the text below it
\end{figure}

\subsection{Asynchronous Information Dispersal}
\label{sec:strawman}

Previous solutions deal with the key challenge of asynchrony in several ways. 
We first define the difference between a \textit{responsive}, \textit{unresponsive}, and \textit{failed} node as follows: a responsive node is a node that responds to messages in a timely manner, i.e., before the leader's timeout expires. An unresponsive node is a node that is still alive, but due to bad network conditions does not respond before the timeout expires. Notably, an unresponsive node can later become responsive if network conditions improve. Lastly, a failed node is a node that has crashed and will not respond to any incoming messages. The difficulty in asynchrony is that a leader cannot distinguish between an unresponsive and failed node as from its perspective they both behave the same. 

\textbf{Full Replication Fallback:} 
Figure \ref{fig:CodedRaft} (top row) demonstrates an approach where a leader starts the replication of every message by broadcasting a single encoded fragment to each node. If there are any unresponsive nodes, i.e., if the leader misses even a single acknowledgment,
the leader switches to full-copy replication. Note that in this case, a second round of retransmission is required and all nodes must store the entire message. 
Thus, with even a single unresponsive node, the latency doubles and the storage degenerates to full replication with $O(N)$ storage blowup. Some systems that implement this approach are \CRaft \cite{CRaft} and RS-Paxos ~\cite{RS-Paxos}.

% HRaft
% --------
\textbf{Endangered Fragment Resharing:}
Figure~\ref{fig:CodedRaft} (second row) illustrates an approach which starts the same way as Full Replication Fallback and addresses the abrupt performance degradation by reactively replenishing fragments across responsive nodes. Rather than defaulting immediately to full-copy replication,
the leader replicates fragments assigned to unresponsive nodes on available nodes. More specifically, if the leader misses acknowledgments from $f$ nodes, it sends copies of the fragments intended for the $f$ missing nodes to $F$ responsive nodes. Consequently, some nodes hold $f$ extra fragments. 
In many scenarios, resharing endangered fragments improves the storage required compared with falling back to full replication: rather than immediately degrading to full replication, it increases storage gradually, replicating only the fragments for which it misses acknowledgments.  
%As illustrated in Figure \ref{fig:CodedRaft}, Approach 2 also requires retransmission but only stores two fragments instead of the full data item.
However, if many nodes are unresponsive, the responsive nodes must store all the fragments designated for them. For example, if $N/4$ of the nodes are unresponsive, this approach incurs $O(N)$ storage blowup. Several systems ~\cite{HRaft, OceanStore, TotalRecall, Glacier, EternityService} adopt similar approaches. 

% FlexRaft
% -------------
\textbf{Proactive Encoding:} 
Another approach, which improves upon the two prior ones, proactively estimates the number of responsive nodes and adjusts the coding parameters based on this estimate.   
The leader adjusts the erasure code parameters and the assignment of fragments to nodes based on its current estimate. 
For example, if the leader estimates that one node may have failed, it adjusts the encoding scheme to $((F+1)-1, F)$. More generally, if the leader estimates that there are $f$ failures, it uses a $(F+1-f, F)$ encoding scheme. 
Accordingly, each fragment size increases from $\frac{B}{(F+1)}$ to $\frac{B}{(F+1-f)}$, and only $(F+1-f)$ fragments are needed to recover the encoded Message.
After re-encoding, the leader disperses the slightly larger fragment to responsive nodes. 
As shown in Figure \ref{fig:CodedRaft} (third row),  this approach may require a retransmission at the exact time a failure happens. Subsequent rounds do not require retransmissions as they make use of the existing coding parameters until the leader learns that the failure has been repaired. In general, this approach requires retransmission only if it receives fewer responses than it estimated using previous responses to dissemination. Although it lowers storage overhead, this approach ends up with a replicated log in which each entry may use a different coding scheme, complicating meta information maintenance and recovery. A Raft-based solution\cite{FlexRaft} adopts this approach.

\medskip

In summary, all the above approaches incur a retransmission round if they encounter unresponsive nodes; they complicate the reconstruction process because different messages may use different encodings; and they do not address post-dissemination storage maintenance. In the rest of the paper, we detail how \sys overcomes these shortcomings and present an evaluation of its performance.

\subsection{The Erasure Code}

The most common code used in Information Dispersal is Reed-Solomon codes \cite{Reed-Solomon}. However, a natural question is whether information dispersal could benefit from other codes such as LT codes~\cite{LubyTransform}, Fountain codes~\cite{Digital-Fountains}, Locally Recoverable Codes (LRC)~\cite{ErasureCodingAzure}, Minimum Storage Regenerating (MSR) codes~\cite{Clay-Codes} or Random Linear Network Codes (RLNC)~\cite{RLNC}.

MSR codes and LRC codes are primarily optimized for node repair. Since our approach is not concerned with repairing a node, we do not consider these classes of codes. Both LT and Fountain codes are classes of rateless codes that allow the encoder to generate an unlimited number of fragments. 
We deliberately chose a fixed-parameter Reed-Solomon scheme for four reasons.
First, Reed-Solomon is Maximum Distance Separable (MDS): \emph{any} $F+1$
fragments suffice to reconstruct $M$, deterministically. LT and Fountain
codes are only near-MDS and typically require $(1+\epsilon)(F+1)$ fragments
to decode with high probability; for the small dimensions typical in Raft
deployments (e.g., $F+1 = 3$ for $N = 5$), this overhead is both significant
and probabilistic, which would weaken the safety argument in
Lemma~\ref{lem:pruning-safety} that relies on exactly $F+1$ fragments always
being sufficient. RLNC achieves MDS behavior in expectation but only when
the random coefficient matrix is full-rank, which holds with high probability
over large fields but is not guaranteed. Second, Reed-Solomon fragments
carry only a small integer index as metadata, whereas RLNC fragments must
carry a length-$(F+1)$ coefficient vector and LT/Fountain fragments must
carry a degree and neighbor list. Avoiding per-fragment metadata is central
to \sys's goal of simplifying per-log-entry meta-information. Third,
autonomous pruning is sound under Reed-Solomon precisely because fragments
are interchangeable in count: a node can locally discard fragments knowing
that any $F+1$ surviving fragments cluster-wide will decode. With LT or
Fountain codes, the decodability of the surviving set depends on the
specific degree distribution of which fragments were retained, which would
require coordination to reason about safely. Fourth, although rateless codes
offer the advantage of generating fragments on demand, in practice this
benefit is modest in our setting: \sys's adaptive leader only generates
fragments per node based on its current
responsive quorum estimate. We elaborate on how many fragments the \sys leader generates in Section ~\ref{sec:dissemination-protocol}.
Beyond these design-level considerations, Reed-Solomon also offers a
practical implementation advantage: SIMD optimized libraries~\cite{ISA-L} (which we use in our implementation) deliver encoding and decoding throughput that, to the best of our knowledge,
is not matched by any production-grade rateless-code library.

Notwithstanding, \sys can work with any of these erasure codes. However due to the practicality of the optimized library, we choose Reed-Solomon codes.

%% file: problem.tex
\section{System Model}
\label{sec:problem}

\textbf{Asynchrony}
We will consider the classical \textit{asynchronous} model with $F < N/2$ benign failures. More specifically, the system consists of $N$ nodes interconnected via point-to-point message passing, such that $F < N/2$ nodes can crash. We will assume static membership such that once the protocol has started, no nodes will join or leave the system, besides the nodes that fail. Once a node fails, it will not recover and start participating in the protocol again. Furthermore, if a node fails, we make no assumptions about the messages it sent prior to failing, and we make no assumptions about whether messages arrive if sent by a node that eventually crashes. We will assume that links between non-faulty nodes are reliable, and that messages sent between non-faulty nodes eventually arrive, after a finite but unbounded delay. 

\begin{comment}
\Rithwik{Should we keep this here? I feel like it may make it confusing on whether our main solution is the information dispersal or its integration with a \KV}
\DM{Let's move this to the K-V section; here , we need a short definition of the IDA problem itself.} \Rithwik{I think we provide a definition of this in Section 2.1} \DM{True, but we need to repeat it here more formally, but without text motivating or explaining, just a definition. Here, we need to define correctness guarantees, e.g., when an IDA invocation is considered "complete"; later we \textbf{prove} correctness, here we state what is required. Here, give the definition as API, the same one used in the pseudo-code implementation.}
The problem we tackle is information dispersal integrated with a replicated \KV. The integration spans two layers. At the application layer, the \KV maintains a consistent view across nodes: every update is committed through a round of consensus before being applied. Under the hood, this consensus is provided by a log replication protocol, where a leader replicates a sequence of \LogEntries to followers. We integrate the information dispersal algorithm at this lower layer, replacing the full-payload entries that log replication would normally disseminate with erasure-coded fragments. Notably, nodes need not reconstruct an entry before persisting it. They only need to collect the information required to enable post-dissemination pruning.
\end{comment}
We formalize the information dispersal problem via the following API. 
\textsc{Disperse}$(M) \to \{\textsc{Done}\}$ is invoked at the leader 
to initiate dispersal of message $M$ across the $N$ nodes, returning 
\textsc{Done} when dispersal is complete. \textsc{Prune}$(\mathit{id})$ 
is invoked at any non-failed node to potentially discard locally held fragments 
of the message identified by $\mathit{id}$.

A dispersal protocol is \emph{correct} if it satisfies the following 
properties. \textbf{Dispersal Safety:} when \textsc{Disperse}$(M)$ 
returns \textsc{Done}, $M$ remains reconstructable from the fragments 
held across nodes, despite up to $F$ subsequent crash 
failures. \textbf{Pruning Safety:} after any invocation of 
\textsc{Prune}$(\mathit{id})$, $M$ remains reconstructable from the 
fragments held across nodes, despite up to $F$ subsequent 
crash failures. \textbf{Dispersal Termination:} every invocation of 
\textsc{Disperse}$(M)$ by a correct leader eventually returns 
\textsc{Done}, provided the leader does not fail.

\textbf{Performance measures.}
With these assumptions as our starting point, we shift our focus to the metrics we aim to optimize. Our primary goal is to enhance asynchronous performance by optimizing for two metrics:  \textbf{storage cost} and \textbf{dispersal latency}.

%The total storage cost is the total amount of storage nodes use in order to guarantee $M$ remains available; the per-node storage cost is the storage load inflicted per node, potentially non-uniformly.   

We define \textbf{storage cost} as the aggregate size of storage used across all nodes to persist the fragments associated with a message. For  a baseline,  full replication---where a message of size $B$ is sent to all $N$ nodes---incurs a total storage cost of $O(N \cdot B)$. We also define \textbf{per-node storage} as the storage required on a responsive node in the system.

We define \textbf{dispersal latency} as the time between the leader initiating dissemination for a message and the moment the leader is certain that the message has been persisted on a sufficient number of nodes to guarantee recovery. We focus on this metric to isolate the performance of the information dispersal protocol, explicitly decoupling it from the baseline overhead of the replicated \KV and client round-trip times.

%% file: protocol.tex
\section{\sys}
\label{sec:Solution}
\subsection{Dissemination and Reconstruction}
\label{sec:dissemination-protocol}
%\sys decouples dissemination from storage optimization. The process begins when the leader encodes a message $M$ using a $(F+1, (N-1)(F+1))$ erasure coding scheme. 

An \sys leader starts every dissemination attempt by encoding $M$ into a
\textbf{very large} number of fragments using a $(F+1, (F+1)\times(N-1))$
erasure coding scheme: creating $(F+1)\times N$ unique fragments such that
any $F+1$ can reconstruct $M$. The leader employs an adaptive dissemination
strategy, where it varies the number of fragments it sends to each node,
from 1 to $F+1$. Algorithm~\ref{alg:ida-dispersal} formalizes the leader side of this protocol, and
Algorithm~\ref{alg:ida-storage} specifies the corresponding storage node
behavior. The \sys leader uses previous responses to estimate a \RQ. Based on this estimate, the leader determines how many fragments
to assign to each node. For example, in steady state the leader may receive responses from all $N$ nodes in the previous dissemination and send only one fragment to each
node, similar to IDA. If the leader observes up to $N/4$ response omissions, it may increase the number of fragments per node by
one, and send at least two fragments per node. More generally, if the
leader estimates the \RQ to be $F + t$, it sends at least
$\lceil\frac{(F+1)}{t}\rceil$ distinct fragments to each node
(Algorithm~\ref{alg:ida-dispersal}, line 3 computing $f_{\text{per}}$). If
the leader collects responses from $F+t$ nodes acknowledging that they
received $\lceil\frac{(F+1)}{t}\rceil$ fragments each, there will always be
$F+1$ fragments present in the system to reconstruct the original message
$M$. If the leader does not collect $F+t$ acknowledgments, then it needs
to retransmit additional fragments, as handled by
line 16-20 in Algorithm~\ref{alg:ida-dispersal}.
\input{algorithms/leader-dissemination}
However, uniquely to \sys, the leader may prevent retransmission almost
always if it initially sends each node a slightly higher number of
fragments, say $\lceil\frac{(F+1)}{t'}\rceil$ where $t' < t$
(corresponding to the conservative margin $\delta = t - t'$ in
Algorithm~\ref{alg:ida-dispersal}). As an example, in an 11-node setup, the leader can initially disseminate two fragments to each node. If one node is unresponsive, the leader does not have to retransmit, and is able to tolerate up to $N/4 \approx 3$ unresponsive nodes without retransmission. Notably if all nodes do respond, the leader can later inform nodes to prune a fragment. 
There are two advantages in sending a higher number of fragments initially:
\begin{enumerate*}
    \item Lower the risk of retransmission (in the extreme case, zero risk when sending
    $F+1$ fragments per node).
    \item Faster completion of the dissemination phase, since a leader
    must wait for a lower number of responses.
\end{enumerate*}
If a higher number of nodes \textbf{do} respond, e.g., $F+t$ nodes
acknowledge that they store say $\lceil\frac{(F+1)}{t'}\rceil$ fragments
each, then the extra fragments will be immediately pruned via the
\textsc{Prune} procedure in Algorithm~\ref{alg:ida-storage}, which
each follower invokes opportunistically as the leader's threshold values
advance (see the pruning section below). Bandwidth is typically abundant,
while dispersal latency may be critical; hence, the leader can trade off
a slightly higher communication cost at the outset to avoid a latency
penalty later.
%\DM{I am debating whether the "marketting" points below should better move to the end of the Background section or stay here?}

Figure \ref{fig:CodedRaft} (bottom) illustrates \sys in the single-failure scenario we have used before. We illustrate the case where the leader initially assigns two fragments to each node. Even though there is a failure, due to its conservative approach, \sys does not need to retransmit fragments.

\subsection{Post Dissemination Pruning}
\label{sec:post-dissemination}
\sys's primary departure from prior works is that every node can prune the extra fragments post-dissemination by itself. Pruning does not require coordination or re-coding of information and can be carried autonomously for every node.

\begin{enumerate}
    \item \textbf{Asynchronous Collection:} After waiting for a certain number of responses in the initial dissemination round, the leader continues to collect responses for $M$ in the background.
    
    \item \textbf{\RQcapital Updates:} As the leader gathers more responses after initial dissemination completes, it updates its estimate of the number of nodes that have stored fragments of $M$ and sends this information to nodes in the system. 
    
    \item \textbf{Dynamic Storage Optimization:} Updates about the \RQ permit nodes to dynamically discard their fragments. For instance, if a node learns that $\frac{3N}{4}$ are storing fragments of $M$, it may safely discard all but two of its local fragments. In general, if a node determines that $F + t$ nodes are storing fragments, it only needs to retain $\lceil\frac{(F+1)}{t} \rceil$ fragments. Since there are no overlaps between the fragments that nodes store, it does not matter which fragments the node drops. 
\end{enumerate}
Through a combination of these three approaches, \sys uniquely adapts to unstable network conditions by:
\begin{enumerate*}
    \item Only varying the number of fragments assigned per node;
    \item Allowing different numbers of fragments to be stored by different nodes: so long as $F+1$ fragments can be collected in the entire system, $M$ will remain available;
    \item Not requiring re-encoding of $M$;
    \item Not impacting the recovery computation, which always takes $F+1$ fragments and reconstructs $M$ in the same manner, oblivious to the assignment of fragments to nodes
    \item Not requiring maintaining meta-information per message $M$; 
    \item Almost always avoiding retransmission rounds.
\end{enumerate*}

Beyond quantifiable storage and latency gains, a primary design objective is protocol simplicity. Existing solutions often compromise this goal, suffering from either storage inefficiency or significant metadata overhead. By using a static encoding scheme with uniformly distributed fragments, reconstruction is reduced to a simple collection of any $F+1$ fragments, eliminating the need to track fragment placement or manage evolving encoding states.
%This approach differs from traditional IDA, where storage optimization is on the critical path. The primary benefit is that storage reduction happens asynchronously. In periods of network asynchrony, the leader does not need to stall or re-transmit fragments if it does not receive a sufficient number of responses. Once it has collected the $F+1$ threshold, it can commit and proceed to the next message.

\input{algorithms/follower}

\subsection{Safety and Liveness}
\label{sec:proof}
\begin{lemma}[Dispersal Safety]
\label{lem:dispersal-safety}
\textit{When \textsc{Disperse}$(M)$ returns \textsc{Done}, $M$ is
reconstructable despite up to $F$ failures.}
\end{lemma}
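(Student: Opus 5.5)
The proof is a counting argument over distinct Reed--Solomon fragments, combined with the MDS property of the $(F+1,(F+1)\times(N-1))$ code. Any $F+1$ distinct fragments reconstruct $M$, so it suffices to show the following: at the moment \textsc{Disperse}$(M)$ returns \textsc{Done}, every set of $F$ crashed nodes leaves at least $F+1$ \emph{distinct} fragments on the surviving nodes that acknowledged. I will first record two facts from Algorithm~\ref{alg:ida-dispersal}. The leader returns \textsc{Done} only after collecting acknowledgments from $F+t$ distinct nodes, each confirming storage of at least $f_{\text{per}}=\lceil (F+1)/t\rceil$ fragments; if it falls short, it loops through the retransmission branch (lines 16--20) with a larger $f_{\text{per}}$ before returning. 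Fragments assigned to different nodes are pairwise distinct, because the leader draws them from disjoint slices of the $(F+1)\times N$ pool, at most $F+1$ per node.

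Next I fix an arbitrary failure set $S$ with $|S|\le F$. At least $(F+t)-F=t$ acknowledging nodes lie outside $S$. Each of them stores at least $\lceil (F+1)/t\rceil$ fragments, and by disjointness their fragments are distinct. The surviving distinct fragments therefore number at least
\[
t\cdot\left\lceil \frac{F+1}{t}\right\rceil \;\ge\; F+1 .
\]
The MDS property then gives reconstructability of $M$. I will also treat the degenerate case $t\le 0$ separately: $t$ is at least $1$ since the quorum has at least $F+1$ nodes, and when $t=1$ we have $f_{\text{per}}=F+1$, so even a single survivor holds a full decoding set.

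I will also argue that \textsc{Prune} cannot interfere before \textsc{Done}. A follower discards fragments only when it learns a responsive-quorum value $F+t''$ from the leader and keeps $\lceil (F+1)/t''\rceil$. Lemma~\ref{lem:pruning-safety} handles this in general, so here I only need that the quorum value the leader announces is one it has actually certified by acknowledgments. The same counting then applies with $t''$ in place of $t$.

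The main obstacle is the retransmission path. After extra fragments are sent, different acknowledging nodes may hold different counts. Some nodes may have acknowledged in an earlier round with the smaller $f_{\text{per}}$, so the acknowledgment set is not uniform. To handle this, I will make the \textsc{Done} condition precise as a counting invariant: the sum, over the $F+t$ acknowledgers, of their acknowledged counts, minus the $F$ largest such counts, is at least $F+1$. I will show that the algorithm's termination test implies this invariant; in the uniform case it reduces to the bound displayed above. The worst-case adversary crashes the $F$ nodes holding the most fragments, so the invariant is exactly what is needed. Verifying that lines 16--20 establish this invariant, given the new per-node fragment count and the fresh disjoint fragments, is where I expect most of the care to go.
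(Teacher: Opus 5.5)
Your core counting step is the paper's. At exit, at least $|A|-F\ge t'$ acknowledgers survive, each holds at least $f_{\text{per}}$ fragments from its own disjoint block $S_j$, and $t'\lceil (F+1)/t'\rceil\ge F+1$. (What you call $t$ is the algorithm's $t'=\max(1,t-\delta)$.)

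The gap is in the part you flag as the main obstacle and then defer: the retransmission path. The property you need there is not established by the retransmission lines sending fresh fragments, since sent fragments count for nothing until acknowledged. It is established by line~\ref{ln:commit-bounds}. There the leader reassigns $t'$ and $f_{\text{per}}=\lceil (F+1)/t'\rceil$ together, and recomputes $A[M.\mathit{id}]=\{j\mid \mathit{held}[M.\mathit{id},j]\ge f_{\text{per}}\}$ under the new threshold. Together with line~\ref{ln:ack-add}, this gives two facts whenever the exit test at line~\ref{ln:exit} is evaluated:
\begin{enumerate*}
\item every member of $A$ has acknowledged at least the \emph{current} $f_{\text{per}}$;
\item $f_{\text{per}}\cdot t'\ge F+1$ holds.
\end{enumerate*}
So the acknowledgment set is never non-uniform in the sense you worry about: nodes that acknowledged only an earlier, smaller count are dropped from $A$. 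This re-filtering, plus the invariant $f_{\text{per}}\cdot t'\ge F+1$, is exactly what the paper's proof checks. Your ``sum minus the $F$ largest counts'' invariant then follows at once. It would matter only if one wanted a less conservative exit rule than the algorithm actually uses.

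This is the missing idea, not a routine check: without the re-filtering, your plan to show that the termination test implies your invariant fails. Take $N=9$, $F=4$, $\delta=0$. The leader starts with $t'=5$ and $f_{\text{per}}=1$. Suppose six nodes acknowledge one fragment each before the timeout; the leader then recomputes $t'=2$ and $f_{\text{per}}=3$. Had it kept those six nodes in $A$, the test $|A|\ge F+t'=6$ would pass immediately. Yet after four crashes only two fragments would survive, fewer than $F+1=5$.

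Your \textsc{Prune} paragraph is unnecessary. \textsc{AckUpdate} for $M.\mathit{id}$ is sent only from \textsc{OnLateAck}, after \textsc{Disperse} returns, so no node discards fragments of $M$ before \textsc{Done}. Appealing to Lemma~\ref{lem:pruning-safety} here is beside the point.
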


\begin{proof}
Let $A := A[M.\mathit{id}]$ at the moment \textsc{Disperse} exits its
main loop. The exit condition (line~\ref{ln:exit}) guarantees
$|A| \geq F + t'$, and membership in $A$ requires
$\mathit{held}[M.\mathit{id}, j] \geq f_{\text{per}}$ for every
$j \in A$ (lines~\ref{ln:ack-add} and~\ref{ln:commit-bounds}). Since
each $S_j$ is a disjoint block of the partition
$\mathcal{F} = S_1 \cup \cdots \cup S_N$, fragments held by distinct
nodes in $A$ are disjoint.

We first verify the invariant $f_{\text{per}} \cdot t' \geq F + 1$ at
every point where the leader checks the exit condition. Initially
(line~\ref{ln:encode-init}), $f_{\text{per}} = \lceil (F+1)/t' \rceil$
with $t' = \max(1, t - \delta) \geq 1$, so
$f_{\text{per}} \cdot t' \geq F + 1$ by definition of the ceiling.
After a timeout the leader recomputes
$t'_{\text{new}} = \max(1, |A| - F)$ and
$f_{\text{new}} = \lceil (F+1)/t'_{\text{new}} \rceil$
(line~\ref{ln:recompute-bounds}) and commits these to $t'$ and
$f_{\text{per}}$ (line~\ref{ln:commit-bounds}), preserving the
invariant.

After up to $F$ failures, at least $|A| - F \geq t'$ nodes from $A$
survive. Each surviving node holds at least $f_{\text{per}}$ fragments,
and these fragments are pairwise disjoint across nodes. The total
number of surviving fragments is therefore
\[
S \;\geq\; (|A| - F) \cdot f_{\text{per}}
  \;\geq\; t' \cdot f_{\text{per}}
  \;\geq\; F + 1,
\]
which suffices to reconstruct $M$.
\end{proof}

\begin{lemma}[Pruning Safety]
\label{lem:pruning-safety}
\textit{If every node prunes its fragments according to the rule}
\[
r(|Q|) \;=\; \left\lceil \frac{F+1}{|Q| - F} \right\rceil,
\]
\textit{then $M$ remains reconstructable after any pattern of up to
$F$ failures, regardless of how nodes apply the rule using their local
views of $|Q|$.}
\end{lemma}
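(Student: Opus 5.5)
The argument rests on a monotonicity property of the leader's responsive-quorum information. Every value of $|Q|$ a node can observe is a lower bound on a set that really exists. The set $Q$ of nodes the leader certifies as storing fragments of $M$ (its acknowledgment set $A[M.\mathit{id}]$) only grows. Each advertised value $|Q|$ is the size of a concrete set of nodes that have acknowledged holding their assigned fragments. Each node acts on the most recent value it has received, and the leader only advances its threshold.

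So at any moment there is a true set $Q^\ast$ (the leader's current acknowledgment set), and every local view $q_j$ satisfies $q_j \le |Q^\ast|$. Since $r$ is nonincreasing in $|Q|$, each node $j \in Q^\ast$ retains at least
\[
r(q_j) \;\ge\; r(|Q^\ast|).
\]
The first step is to check this carefully against Algorithm~\ref{alg:ida-storage}. I will argue three things: pruning never goes below $r(q)$ for the node's own $q$; it never removes fragments a node has not yet acknowledged in a way that would shrink $Q$; and a node whose view is stale simply prunes less.

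With this uniform bound in hand, the counting follows the proof of Lemma~\ref{lem:dispersal-safety}. Fix any failure pattern of at most $F$ nodes. At least $|Q^\ast| - F$ members of $Q^\ast$ survive, and each holds at least $r(|Q^\ast|)$ fragments. The blocks $S_j$ are disjoint, so these fragments are pairwise distinct. Writing $t = |Q^\ast| - F \ge 1$, the number of distinct surviving fragments is at least
\[
t\cdot \left\lceil \frac{F+1}{t}\right\rceil \;\ge\; F+1,
\]
and the MDS property of the $(F+1,(F+1)(N-1))$ Reed--Solomon code gives reconstruction. I will also need $|Q^\ast| \ge F+1$, so that $r$ is well defined. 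This holds because pruning is only triggered after \textsc{Disperse} returns \textsc{Done}, at which point $|A| \ge F+t' \ge F+1$.

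The main obstacle is making the interleaving argument rigorous. Nodes prune at different times using different views, and failures can occur during this process. The worry is a node pruning based on a large $|Q|$ while some members of that $Q$ have already pruned or crashed. I plan to handle this with an invariant, proved by induction over the events (acknowledgment, threshold broadcast, prune, crash). The invariant says that after each event, every member of the leader's current set $Q^\ast$ that has not crashed holds at least $r(|Q^\ast|)$ distinct fragments from its block $S_j$. Crashes never violate this invariant; they are simply counted against the budget $F$ in the final step. Acknowledgments enlarge $Q^\ast$ and only lower the required $r$. Prunes preserve it by the monotonicity observation above.

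For the final count I take $Q^\ast$ at the moment of evaluation rather than any node's local view. This is what makes the conclusion independent of how nodes apply the rule locally. Finally, I will note that pruning by nodes outside $Q^\ast$ is irrelevant, since they are not counted.
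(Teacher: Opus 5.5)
Your proposal is correct and follows the paper's proof. The paper likewise bounds every node's local view by the largest quorum value $|Q_{hi}|$ known anywhere in the system, and uses that $r$ is non-increasing so each survivor keeps at least $r(|Q_{hi}|)$ fragments. It then counts the at least $|Q_{hi}|-F$ surviving members of $Q_{hi}$, whose blocks are disjoint, to get $(|Q_{hi}|-F)\lceil (F+1)/(|Q_{hi}|-F)\rceil \ge F+1$.

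Your event-by-event invariant and your check that $|Q^\ast| \ge F+1$ make explicit what the paper leaves implicit. The invariant should start when \textsc{Disperse} returns, since that exit condition is what gives $f_{\text{per}} \ge r(|A|)$ for members that have not yet pruned.
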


\begin{proof}
Let $Q_{\mathit{hi}}$ be the set of nodes corresponding to the highest acknowledgment count, $|Q_{\mathit{hi}}|$, known to any individual node in the system.

Every surviving node $u$ that has pruned its fragments did so autonomously based on some local view $|Q_u| \le |Q_{\mathit{hi}}|$, retaining $r(|Q_u|)$ fragments (lines~\ref{ln:prune-rule}--\ref{ln:prune-discard}). Because the pruning function $r(x) = \lceil \frac{F+1}{x - F} \rceil$ is monotonically non-increasing as $x$ increases, a lower local view of the quorum results in a node retaining more fragments. Therefore, each surviving node retains at least the minimum required by the highest view: $r(|Q_{\mathit{hi}}|) = \lceil \frac{F+1}{|Q_{\mathit{hi}}| - F} \rceil$ fragments.

After up to $F$ worst-case failures, at least $|Q_{\mathit{hi}}| - F$ nodes from the set $Q_{\mathit{hi}}$ will survive. Since the initial dispersal guarantees that fragments held by different nodes are strictly disjoint, the total number of surviving fragments, $S$, is bounded below by the number of surviving nodes multiplied by the minimum number of fragments each retained:
\[
S \;\ge\; (|Q_{\mathit{hi}}| - F) \cdot
\left\lceil \frac{F+1}{|Q_{\mathit{hi}}| - F} \right\rceil
\;\ge\; F + 1.
\]
Because $S \ge F+1$, the threshold to decode the $(F+1, (F+1)\times(N-1))$ Reed-Solomon scheme is always met. Thus, $M$ remains reconstructable under pruning.
\end{proof}

\begin{lemma}[IDA Termination]
\label{lem:ida-termination}
Every invocation of \textsc{Disperse}$(M)$ by a correct leader eventually
returns \textsc{Done}, provided the leader remains correct throughout.
\end{lemma}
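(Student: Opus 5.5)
The proof is a progress argument over the main loop of Algorithm~\ref{alg:ida-dispersal}. We use the exit condition $|A[M.\mathit{id}]| \geq F + t'$ (line~\ref{ln:exit}) and the timeout-driven recomputation (lines~\ref{ln:recompute-bounds}--\ref{ln:commit-bounds}). The model gives two facts. At least $N - F \geq F+1$ nodes never crash. Links between non-faulty nodes are reliable and eventually deliver.

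\textbf{Step 1: acknowledgments accumulate.} Fix the set $C$ of nodes that never crash, with $|C| \geq N-F$. Every fragment batch the leader sends to $j \in C$ eventually arrives. By Algorithm~\ref{alg:ida-storage}, $j$ then stores the batch and acknowledges it, and that acknowledgment eventually reaches the leader. We need $A$ to be monotone: once $\mathit{held}[M.\mathit{id},j] \geq f_{\text{per}}$ holds for the current $f_{\text{per}}$, node $j$ stays in $A$ unless $f_{\text{per}}$ increases. So every $j \in C$ eventually joins $A$ with respect to any value of $f_{\text{per}}$ that stays fixed long enough.

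\textbf{Step 2: the recomputation converges.} On timeout the leader sets $t'_{\text{new}} = \max(1, |A| - F)$. This is exactly the value for which the current $|A|$ already satisfies $|A| \geq F + t'_{\text{new}}$, provided $|A| \geq F+1$. We split into two cases.
\begin{itemize}
\item If $|A| \geq F+1$ at the timeout, the exit condition holds right after the commit, so the loop exits. Dispersal Safety for this exit is already covered by Lemma~\ref{lem:dispersal-safety}.
\item If $|A| \leq F$, then $t'_{\text{new}} = 1$ and $f_{\text{per}} = F+1$. This is the maximum, since $f_{\text{per}} \leq F+1$ always. After this point $f_{\text{per}}$ can never increase again. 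The retransmission branch (lines 16--20) sends the missing fragments of each block $S_j$, which has size $F+1$, to every node not yet in $A$.
\end{itemize}

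\textbf{Step 3: finish the second case.} By Step~1, every $j \in C$ eventually acknowledges holding $F+1$ fragments. Hence $|A|$ eventually reaches $|C| \geq N - F \geq F+1 = F + t'$, and the loop exits with \textsc{Done}. Only finitely many timeouts can occur before this point, because $f_{\text{per}}$ changes at most once to its maximum and $t'$ is then pinned at $1$.

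\textbf{Main obstacle.} The delicate part is showing that the leader retransmits to nodes outside $A$ after every timeout, not just once. A retransmission sent to a slow but correct node might arrive after yet another recomputation. We must also rule out a livelock in which $t'$ oscillates and repeatedly evicts nodes from $A$. I would handle this by observing that after the first timeout with $|A| \le F$ the bounds are frozen at $(t', f_{\text{per}}) = (1, F+1)$. Nodes in $A$ stay in $A$, and each pending send to $C$ is eventually delivered. If the pseudocode instead retransmits only once, we rely on reliable links: that single send suffices, because the message is never lost between correct nodes.
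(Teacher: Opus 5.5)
\textbf{Gap in Step~2.} The first case of your Step~2 is false, and the bound on the number of timeouts in Step~3 depends on it. At line~\ref{ln:commit-bounds} the leader does not keep the old ack set. It recomputes $A[M.\mathit{id}] \gets \{j \mid \mathit{held}[M.\mathit{id},j] \ge f_{\text{new}}\}$.

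Because $t'_{\text{new}} \le t'$, you always have $f_{\text{new}} \ge f_{\text{per}}$. Whenever this inequality is strict, every node that joined $A$ by acknowledging only the old $f_{\text{per}}$ fragments is evicted. Your observation that $|A| \ge F + t'_{\text{new}}$ is about the set before recomputation, not the set the loop guard actually tests.

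Here is a concrete instance. Take $N=11$, $F=5$, $\delta=0$ and $\mathit{responsiveQuorum}=11$, so the leader starts with $t'=6$ and $f_{\text{per}}=1$.
\begin{enumerate}
\item Suppose 9 nodes ack before the timer expires. Then $t'_{\text{new}}=4$ and $f_{\text{new}}=2$.
\item The recomputed $A$ is empty, since no node has yet acknowledged two fragments. So the loop does not exit.
\item The next round waits for $F+4$ nodes to hold two fragments. It can also time out, say with 7 such nodes. That gives $t'=2$ and $f_{\text{per}}=3$, and the pattern can repeat.
\end{enumerate}
So $f_{\text{per}}$ can climb through several intermediate values rather than jumping once to $F+1$. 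Your claim that ``$f_{\text{per}}$ changes at most once'' therefore does not bound the number of timeouts.

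\textbf{The missing step.} You need the strict decrease of $t'$, which the timeout condition itself gives you. A timeout happens only when $|A| \le F + t' - 1$, so
\[
t'_{\text{new}} = \max(1, |A|-F) \le \max(1, t'-1).
\]
Hence after at most $t'_0 - 1$ timeouts, where $t'_0$ is the initial threshold, either the loop has already exited or $(t', f_{\text{per}}) = (1, F+1)$.

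From that point on:
\begin{enumerate}
\item The threshold no longer changes, so $A$ is monotone.
\item Every node has been sent its whole block $S_j$. Note that retransmission goes to every $j$ with $\mathit{sent}[M.\mathit{id},j] < f_{\text{new}}$, including nodes currently in $A$, not only to nodes outside $A$ as you wrote.
\item Your Step~3 then finishes the proof.
\end{enumerate}
This strict decrease is exactly what underlies the paper's assertion that eventually $t'=1$. If you replace your case split with it, your argument coincides with the paper's proof.
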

\begin{proof}
In round $k$, let $f_{\text{per}}^{(k)}$ be the per-node fragment count and
$t'^{(k)}$ the wait threshold. If round $k$ collects $F + t'^{(k)}$
acknowledgments, \textsc{Disperse} returns by
Lemma~\ref{lem:dispersal-safety}. Otherwise, a timeout occurs, and the
leader recalculates the wait threshold as
$t'^{(k+1)} = \max(1, |A^{(k)}| - F)$ and the new fragment requirement
$f_{\text{per}}^{(k+1)} = \lceil (F+1)/t'^{(k+1)} \rceil$
(line~\ref{ln:recompute-bounds}). The leader then broadcasts additional
fragments to \emph{all} nodes $j$ with
$f_{\text{new}} > \mathit{sent}[M.\mathit{id}, j]$
(line~\ref{ln:retransmit}), rather than restricting retransmissions to
any earlier subset.

Because retransmissions are broadcast globally and each node's highest
acknowledged fragment count $\mathit{held}[M.\mathit{id}, j]$ is only
ever increased (line~\ref{ln:held-update}), never discarded across
rounds, the total number of fragments successfully held by correct
nodes is monotonically non-decreasing in $k$. The wait threshold $t'$
is bounded below by $1$, which upper-bounds the per-node fragment count
at $f_{\text{per}} = F+1$.

Eventually $t' = 1$, in which the termination condition reduces to
$|A[M.\mathit{id}]| \ge F+1$ and each node is required to hold all
$F+1$ fragments of its assigned set $S_j$. By assumption, there are at
least $F+1$ responsive nodes. Each such node, upon eventually receiving
its $F+1$ fragments, replies with an acknowledgment $q = F+1$, which
the leader eventually delivers. Thus $|A[M.\mathit{id}]|$ eventually
reaches $F+1$ and the outer \textbf{while} loop exits
(line~\ref{ln:exit}), so \textsc{Disperse} returns \textsc{Done}.
\end{proof}

%% file: algorithms/leader-dissemination.tex
\begin{algorithm}[!htbp]
\caption{\sys IDA Dispersal (Leader)}
\label{alg:ida-dispersal}
\footnotesize
\begin{algorithmic}[1]
\State \textbf{State:} $\mathit{responsiveQuorum} \gets N$;
       $A[\cdot]$: ack sets per id;
       $\mathit{sent}[\mathit{id}, j], \mathit{held}[\mathit{id}, j]$:
       \# fragments sent to / acked by $j$;
       $f_{\text{per}}$: current per-node fragment threshold
\Statex
\Procedure{Disperse}{$M$}
    \State $\mathcal{F} \gets$ \Call{Encode}{$M$};
           $\;t \gets \mathit{responsiveQuorum} - F$;
           $\;\boldsymbol{t' \gets \max(1, t - \delta)}$;
           $\;f_{\text{per}} \gets \lceil (F+1)/t' \rceil$ \label{ln:encode-init}
    \State Partition $\mathcal{F}$ into $S_1, \ldots, S_N$ with $|S_j| = F+1$; \Comment{Each node gets a pool of $F{+}1$ unique fragments}
           $\;A[M.\mathit{id}] \gets \emptyset$
    \For{each node $j$}
        \State Send \textsc{Message}$(M.\mathit{id}, S_j[1 \dots f_{\text{per}}])$ to $j$;
               $\;\mathit{sent}[M.\mathit{id}, j] \gets f_{\text{per}}$;
               $\;\mathit{held}[M.\mathit{id}, j] \gets 0$
    \EndFor
    \While{$|A[M.\mathit{id}]| < F + t'$}
        \Comment{Continuously loop and adjust the number of fragments sent}
        \State Start timer $\tau$
        \While{$|A[M.\mathit{id}]| < F + t'$ \textbf{and} timer not expired}
            \State Wait for \textsc{Ack}$(\mathit{id}, j, q)$;
                   \textbf{if} $q > \mathit{held}[M.\mathit{id}, j]$
                   \textbf{then} $\mathit{held}[M.\mathit{id}, j] \gets q$ \label{ln:held-update}
            \State \textbf{if} $\mathit{held}[M.\mathit{id}, j] \geq f_{\text{per}}$
                   \textbf{then} $A[M.\mathit{id}] \gets A[M.\mathit{id}] \cup \{j\}$ \label{ln:ack-add}
        \EndWhile
        \If{$|A[M.\mathit{id}]| \geq F + t'$} \label{ln:exit} \textbf{break} \EndIf
        \Statex \hspace{1.5em} \Comment{timeout: recalculate bounds and send additional fragments}
        \State $\boldsymbol{t'_{\text{new}} \gets \max(1, |A[M.\mathit{id}]| - F)}$;
               $\;f_{\text{new}} \gets \lceil (F+1)/t'_{\text{new}} \rceil$ \label{ln:recompute-bounds}
        \For{each node $j$ \textbf{with} $f_{\text{new}} > \mathit{sent}[M.\mathit{id}, j]$}
            \State Send \textsc{Message}$(M.\mathit{id}, S_j[\mathit{sent}[M.\mathit{id}, j]{+}1 \dots f_{\text{new}}])$ to $j$;
                   $\;\mathit{sent}[M.\mathit{id}, j] \gets f_{\text{new}}$ \label{ln:retransmit}
        \EndFor
        \State $t' \gets t'_{\text{new}}$;
               $\;f_{\text{per}} \gets f_{\text{new}}$;
               $\;A[M.\mathit{id}] \gets \{ j \mid \mathit{held}[M.\mathit{id}, j] \geq f_{\text{per}} \}$ \label{ln:commit-bounds} \Comment{Recompute Ack set under new threshold}
    \EndWhile
    \State $\mathit{responsiveQuorum} \gets |A[M.\mathit{id}]|$;
           $\;$\textbf{return} \textsc{Done}
\EndProcedure
\Statex
\Procedure{OnLateAck}{$\mathit{id}$, $j$, $q$}
    \Comment{after \textsc{Disperse} returns}
    \If{$q > \mathit{held}[\mathit{id}, j]$}
        \State $\mathit{held}[\mathit{id}, j] \gets q$;
               \textbf{if} $j \notin A[\mathit{id}]$ \textbf{and} $\mathit{held}[\mathit{id}, j] \geq f_{\text{per}}$
               \textbf{then} $A[\mathit{id}] \gets A[\mathit{id}] \cup \{j\}$
        \State Send \textsc{AckUpdate}$(\mathit{id}, |A[\mathit{id}]|)$ to all nodes \Comment{Update nodes so they can prune}
    \EndIf
\EndProcedure
\end{algorithmic}
\end{algorithm}

%% file: algorithms/follower.tex
\begin{algorithm}[!htbp]
\caption{\sys: Storage and Autonomous Pruning}
\label{alg:ida-storage}
\footnotesize
\begin{algorithmic}[1]
\State \textbf{State:} $\mathit{store}[\mathit{id}]$: fragments held for message id;
       $Q[\mathit{id}]$: highest known \# nodes holding id
\Statex
\Procedure{OnMessage}{$\mathit{leader}, \mathit{id}, \mathit{fragments}$}
    \Comment{from leader's \textsc{Message} in Alg.~\ref{alg:ida-dispersal}}
    \State $\mathit{store}[\mathit{id}] \gets \mathit{store}[\mathit{id}] \cup \mathit{fragments}$
    \State Send \textsc{Ack}$(\mathit{id}, \mathit{self}, |\mathit{store}[\mathit{id}]|)$ to leader
           \Comment{ack carries current fragment count}
\EndProcedure
\Statex
\Procedure{OnAckUpdate}{$\mathit{id}, q$}
    \Comment{from leader's \textsc{AckUpdate} in Alg.~\ref{alg:ida-dispersal}}
    \If{$q > Q[\mathit{id}]$}
        \State $Q[\mathit{id}] \gets q$; \Call{Prune}{$\mathit{id}$}
    \EndIf
\EndProcedure
\Statex
\Procedure{Prune}{$\mathit{id}$}
    \State $q \gets Q[\mathit{id}]$;
           $\;r \gets \lceil (F+1)/(q - F) \rceil$ \label{ln:prune-rule}
           \Comment{required fragments per node}
    \While{$|\mathit{store}[\mathit{id}]| > r$} \label{ln:prune-discard}
        \State Discard any one fragment from $\mathit{store}[\mathit{id}]$
    \EndWhile
\EndProcedure
\end{algorithmic}
\end{algorithm}

%% file: useCase.tex
\section{Use Case: Encoded information dispersal in a \KV}
\label{sec:use-case}
To demonstrate the benefit of the \sys approach, we build an application using it. We choose a replicated \KV backed by log 
replication for two reasons: they are a demanding target where both 
dispersal latency and long-term storage matter, and they are the setting 
in which prior IDA-integrated protocols~\cite{HRaft, CRaft, FlexRaft} 
have been evaluated, enabling direct comparison.

Replicated \KVs are the storage substrate behind 
databases~\cite{Spanner}, distributed file systems~\cite{pan2021facebook}, and 
control planes~\cite{Brooker2020}: a replicated log establishes the 
authoritative order of operations across nodes, and a \KV applied on 
top serves the live data. With IoT~\cite{Microsoft_IoTSignals_2019} and AI 
workloads~\cite{JLL_AI_DataCenters_2024} pushing enterprise storage demand into the 
zettabyte\footnote{a zettabyte is $10^9$ terabytes} range~\cite{IDC_DataSphere}, 
the on-disk footprint of these stores has become an increasing cost. 
Bandwidth spent during short-term dissemination is transient, while 
storage is paid for indefinitely, on every replica. Key-value stores 
are therefore a setting where \sys's post-dissemination pruning 
translates most directly into long-term operational savings, making 
them a natural target for evaluating the protocol's storage benefits 
in practice.

Integrating \sys into this setting requires starting from a log-replication core that consistently sequences updates to key-value entries across replicas, and replacing its internal information-dispersal mechanism with \sys-based erasure-coded dissemination. To build this implementation, which we call \sysConsensus, we borrow code from an existing replicated \KV system~\cite{FlexRaft} and modify its dissemination layer accordingly.

In Section~\ref{sec:evaluation}, we evaluate \sysConsensus against several prior protocols that integrate IDA with log replication in the same setting~\cite{FlexRaft, HRaft, CRaft}. 
%\DM{add others?}.

%A leader can employ IDA to disseminate proposals to save communication and storage costs, but doing so poses operational challenges. The leader must successfully disseminate fragments to enough nodes to guarantee data availability, yet cannot wait indefinitely for responses since these protocols operate under asynchronous network conditions. Once the leader gives up waiting, it must adjust the coding parameters, the fragment-to-node assignment, or both.

%% file: implementation.tex
\section{Implementation}
\label{sec:implementation}

We implemented \sysConsensus in C++ building upon the existing \FlexRaft implementation~\cite{FlexRaft-Code} which contains approximately 6,000 lines of code and modified it by adding roughly 1,000 lines of code. We choose FlexRaft as a starting point because it too implements a KV store backed by log replication. To facilitate evaluation, we constructed a distributed key-value store backed by RocksDB v7.3.1 \cite{RocksDB}. Architecturally, each node hosts an \sysConsensus alongside the RocksDB state machine, utilizing a background thread to asynchronously apply committed entries. We provide further details of the integration of \sys in the Appendix \ref{sec:kv-integration}.

%\begin{itemize}

%\item \textbf{Transition to two fragments (at Threshold~1):} For entries indexed up to \textbf{Threshold~1} (acknowledged by $3/4N$ nodes), the follower node retains two fragments. This is necessary because, with $F < N/2$ potential failures, even in the worst case of $F$ failures, there will still be $N/4$ nodes that haven't failed. To ensure the cluster retains enough fragments to reconstruct the entry (requiring $F+1$ fragments), each node must store two fragments ($2 \times N/4 \ge F+1$).
%\item \textbf{Transition to one fragment (at Threshold~2):} For entries indexed up to \textbf{Threshold~2} (acknowledged by $N$ nodes), the follower node retains only a single fragment. Since the entry is replicated to every node, even in the worst-case scenario of $F$ failures, the $F+1$ surviving nodes guarantee the availability of $F+1$ unique fragments, satisfying the reconstruction requirement.
%\end{itemize}

Our implementation utilizes a leader-driven dissemination model, where the leader broadcasts network response information to all followers. These metrics are derived from data gathered via periodic heartbeats and standard message exchanges in Raft. While we focus on this leader-centric approach, alternative  mechanisms such as all-to-all peer communication could be employed to allow nodes to independently assess network state. Such decentralized state acquisition would enable nodes to perform local storage optimizations without relying on the leader's view, a direction we reserve for future work. We provide pseudocode for \sys's integration with Log Replication in Appendix \ref{sec:RaftureRaftIntegration}.

%% file: evaluation.tex
\section{Evaluation}
\label{sec:evaluation}
%\DM{Change to: We provide two sets of evaluations, one analytical and experimental. The anlaytical provides predicted storage of Craft, Hraft, FlexRaft and \sys against unstable network condition (that may be difficult to emulate in live settings). The experimental evaluation compares an implementation of \sys forked from FlextRaft against FlexRaft itself, thus elimintating any side-effects of different communication libraries and the like. We chose FlexRaft as state of art .... blah blah.}

We evaluate \sys both as a standalone dispersal protocol (analytically) and as the engine of a key-value store, \sysConsensus (experimentally). Analytically, we study the dispersal protocol in isolation to separate its behavior from any application above it and to reproduce network-partition scenarios that are hard to stage in a live cluster. We compare \sys against three dispersal substrates: Full Replication Fallback, Endangered Fragment Resharing, and Proactive Encoding. Experimentally, we evaluate \sysConsensus against \FlexRaft, the only system that adopts the Proactive Encoding approach. Its codebase~\cite{FlexRaft-Code} provides a robust open-source implementation of this dispersal substrate integrated with a \KV. For the experimental study, we build \sysConsensus by modifying \FlexRaft's dispersal layer to align with \sys, and use the unmodified \FlexRaft as our baseline. This ensures any observed performance differences can be attributed to the dispersal protocol itself.

Through our evaluation we aim to demonstrate the following claims: \textbf{C1} \sys and \sysConsensus maintain the same storage utilization as \FlexRaftIDA and \FlexRaft, respectively, under normal-case network conditions; \textbf{C2} \sys and \sysConsensus maintain better storage utilization than \FlexRaftIDA and \FlexRaft, respectively, under network partitions; \textbf{C3} \sys maintains better dispersal latency than all previous approaches under unstable network conditions; \textbf{C4} \sysConsensus maintains the same or better performance than \FlexRaft in a distributed setup. 

\subsection{Experimental Setup}
We conducted our experiments on a high-performance DigitalOcean Droplet equipped with 16 vCPUs and 64GB of RAM. We primarily focus on a cluster within a single physical machine to isolate the storage metrics from external network variance. Since our primary evaluation metric is storage utilization efficiency—which is a function of the consensus state rather than packet latency—colocating the nodes allows us to deterministically reproduce partition scenarios without the noise of non-deterministic network jitter. We also include performance evaluation of a distributed setup where we deploy five nodes with the same configuration in DigitalOcean's SFO3 region with a colocated client sending transactions. We include the results of this evaluation in Section \ref{sec:performance}

\subsection{Storage Utilization}

We first present an analytical evaluation of \sys, and verify our results with an experimental evaluation of \sysConsensus. 
Figure~\ref{fig:partition:two} depicts storage utilization when two nodes are partitioned from the network for increasing cluster sizes. The two nodes are partitioned until 2000 entries and don't respond to any incoming messages. Once the partition ends the nodes start responding to messages. Under these conditions, \sys performs as well as or slightly worse than \FlexRaftIDA until the partition ends. Once the partition ends, \sys has better performance as it is able to adapt and reduce the higher storage cost required during the partition. Initially, upon detecting an unresponsive node, \FlexRaftIDA interprets this as a node crash and adjusts its encoding strategy accordingly. In contrast, \sys assumes the node is simply unresponsive (asynchrony). By assuming the node is unresponsive, \sys incurs a higher storage overhead as it waits for the node to become responsive. Initially, this is a slight disadvantage as all nodes in \sys are storing more information than required. However, once the node rejoins the network, nodes can prune the extra storage incurred and reduce storage down to the optimal amount.

%This performance gap stems from how they interpret an unresponsive node: FlexRaft interprets the lack of response as a node crash, allowing it to immediately adjust its encoding strategy. In contrast, \sysConsensus assumes the node is simply unresponsive (asynchrony). Consequently, \sysConsensus continues to operate without modifying its encoding scheme to exclude the missing node, incurring a higher storage overhead to maintain safety assumptions while waiting for the node to rejoin the network.

\begin{figure*}[t]
    \centering
    \vspace{-15pt}
    
    \begin{subfigure}[t]{0.24\linewidth} % Increased to 0.33
        \centering
        \includegraphics[width=\linewidth]{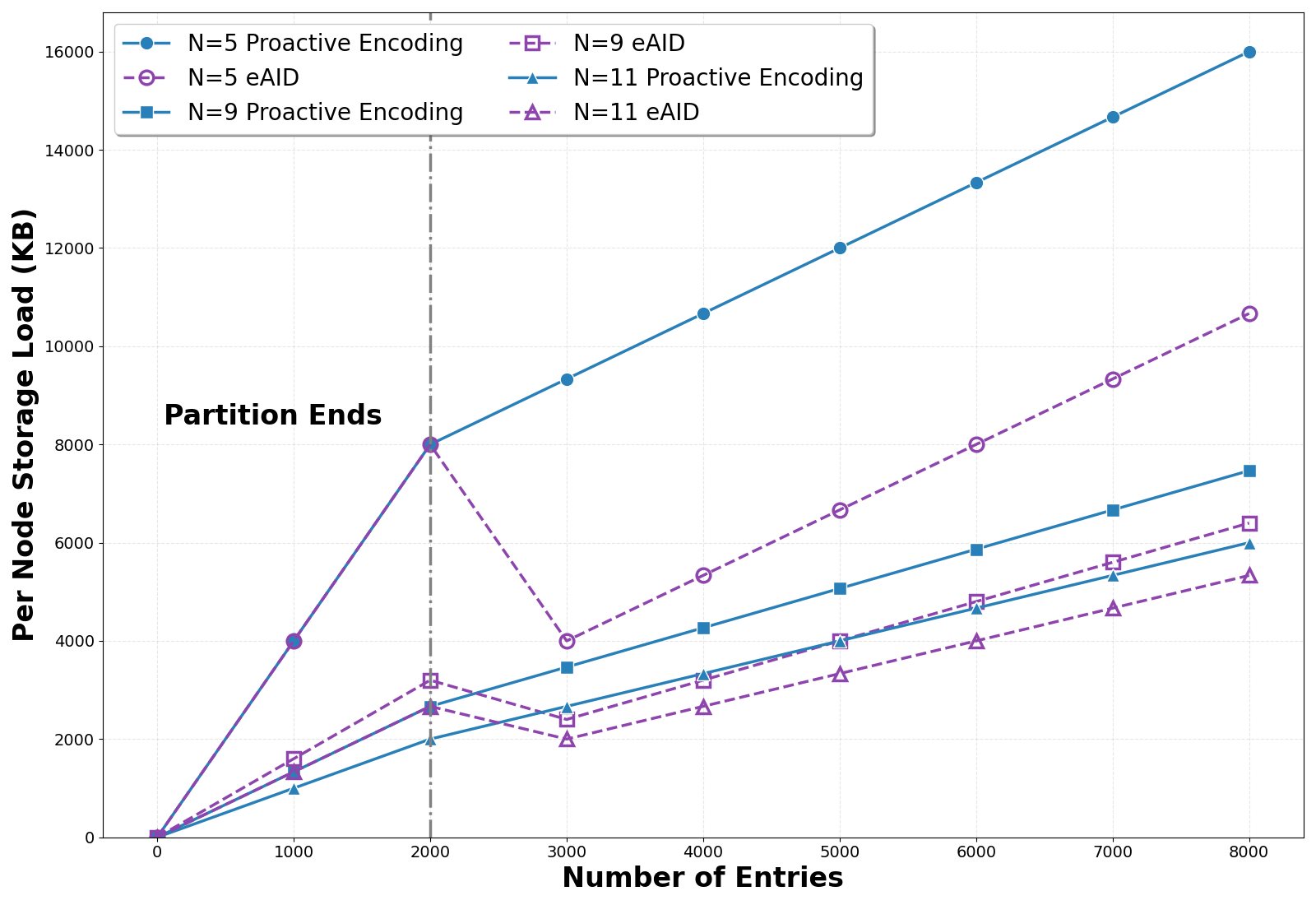}
        \vspace{-14pt}
        \caption{Two partitioned nodes}
        \label{fig:partition:two}
    \end{subfigure}% <--- VITAL: This % removes the space after the image
    \hfill% <--- Pushes images apart using only the remaining 1% of space
    \begin{subfigure}[t]{0.24\linewidth} 
        \centering
        \includegraphics[width=\linewidth]{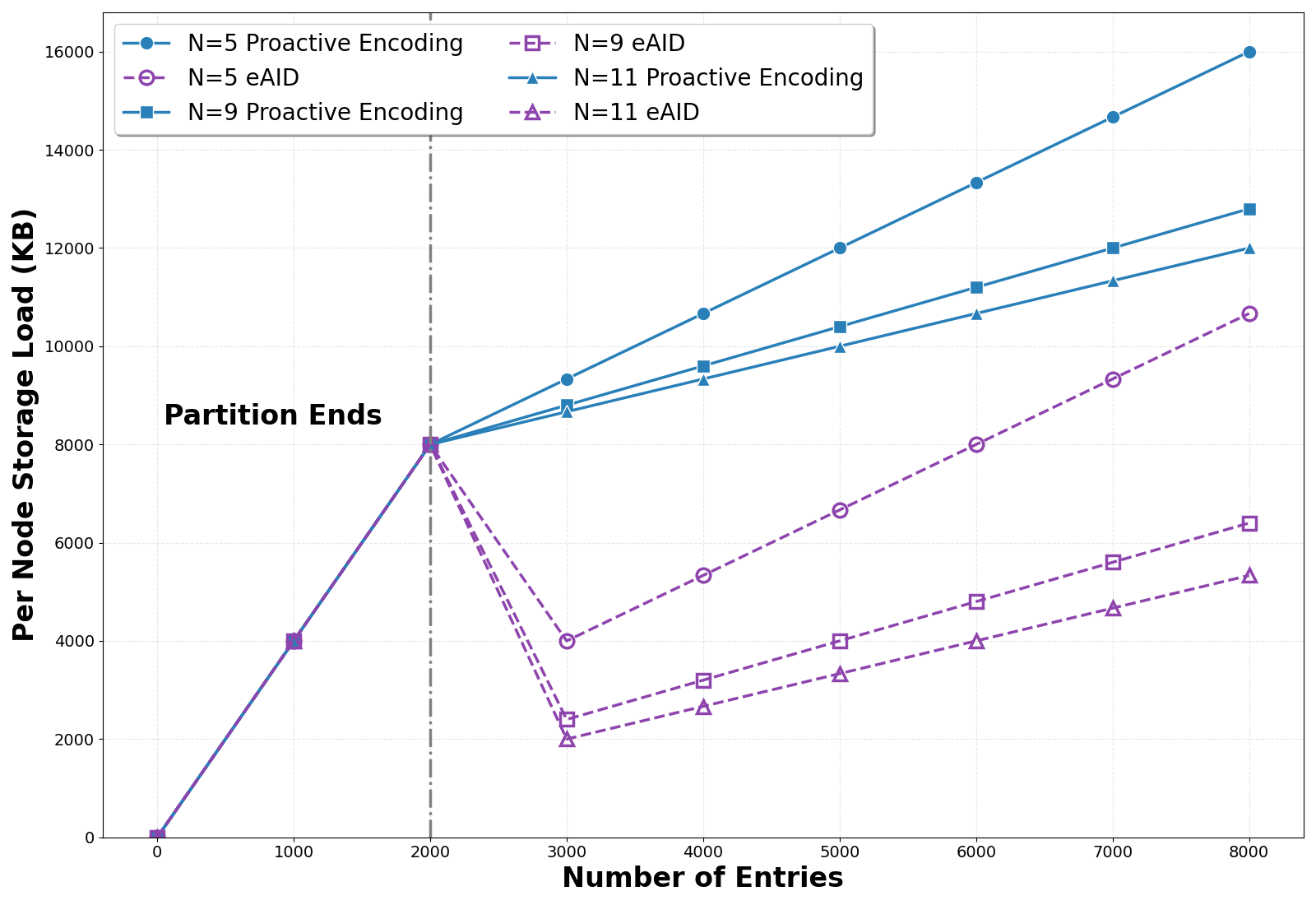}
        \vspace{-14pt}
        \caption{$F$ partitioned nodes}
        \label{fig:partition:max}
    \end{subfigure}% <--- VITAL: This % removes the space
    \hfill%
    \begin{subfigure}[t]{0.24\linewidth}
        \centering
        \includegraphics[width=\linewidth]{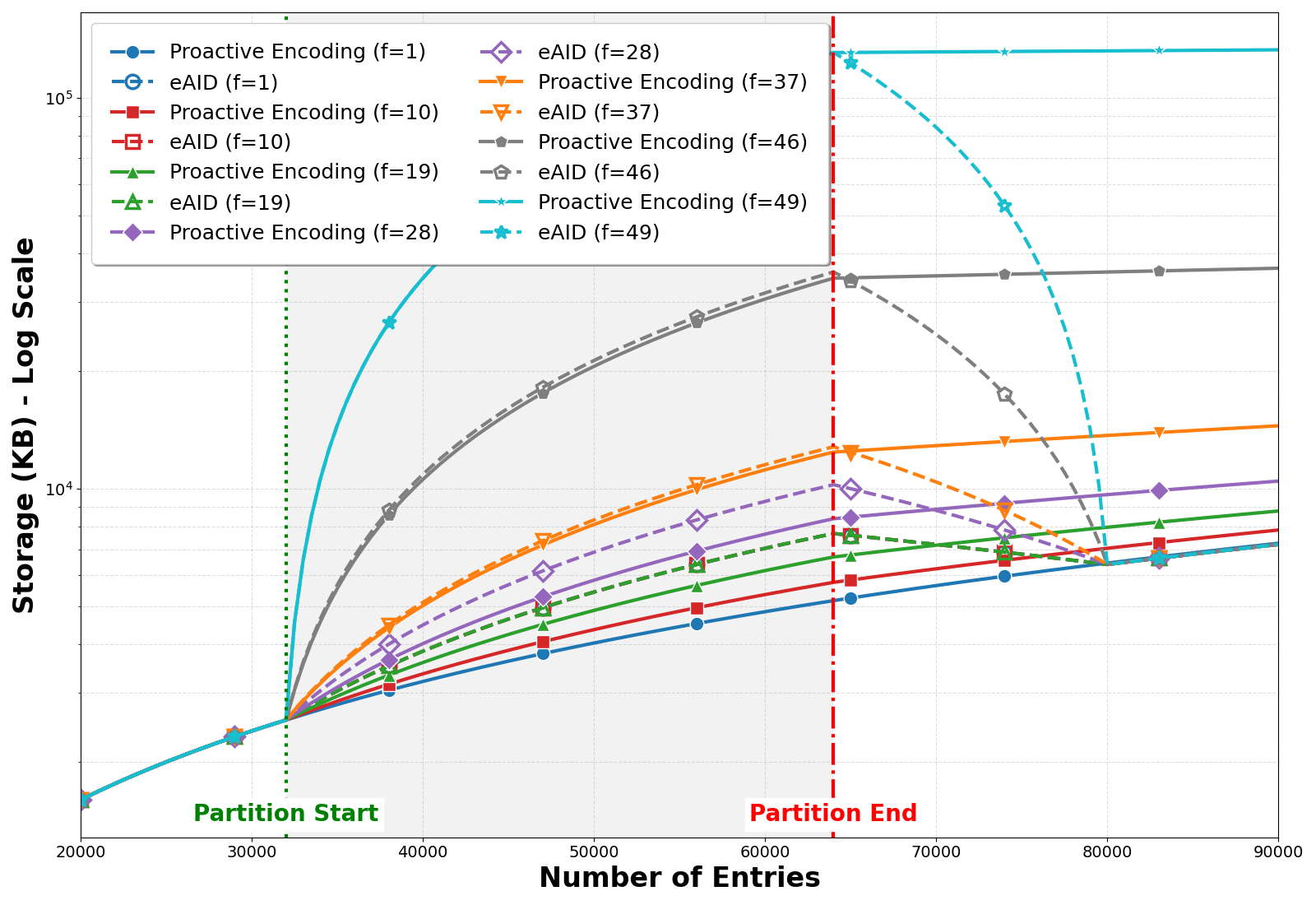}
        \caption{Storage cost; $N=99$ and $f$ partitioned nodes}
        \label{fig:increasingF}
    \end{subfigure}
\hfill%
    \begin{subfigure}[t]{0.24\linewidth} 
        \centering
        \includegraphics[width=\linewidth]{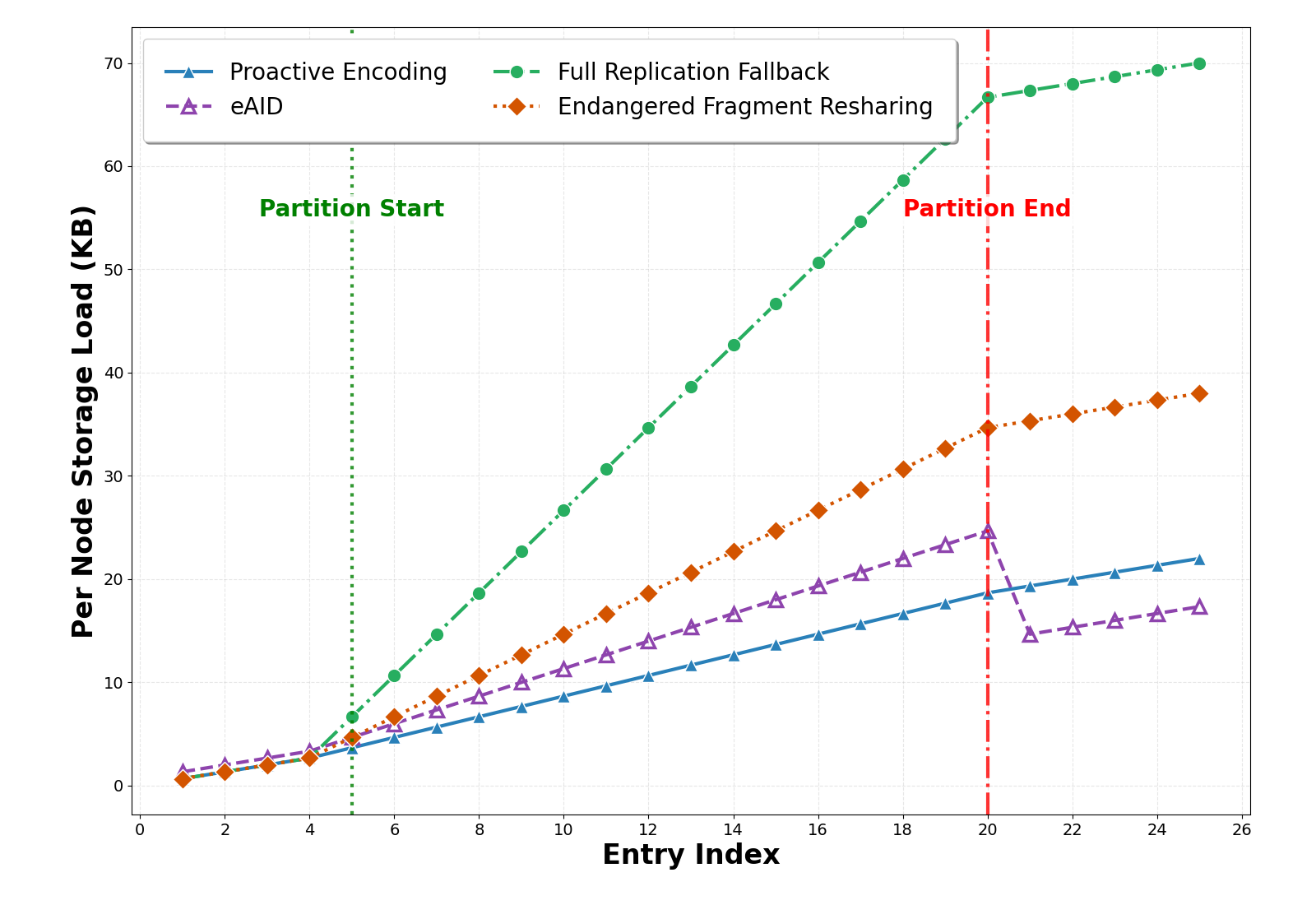}
        \vspace{-14pt}
        \caption{Zoom on partition start/end; $N=11$, $f=2$}
        \label{fig:StorageCost}
    \end{subfigure}

    \vspace{-10pt}
    \caption{Analytical evaluation of storage utilization under network partitions.}
    \label{fig:partition}
    \vspace{-2ex}
\end{figure*}

\begin{figure}[t]
\centering % Centers the entire group of subfigures
\begin{subfigure}{0.33\linewidth}
    \centering
    \includegraphics[width=\linewidth, clip]{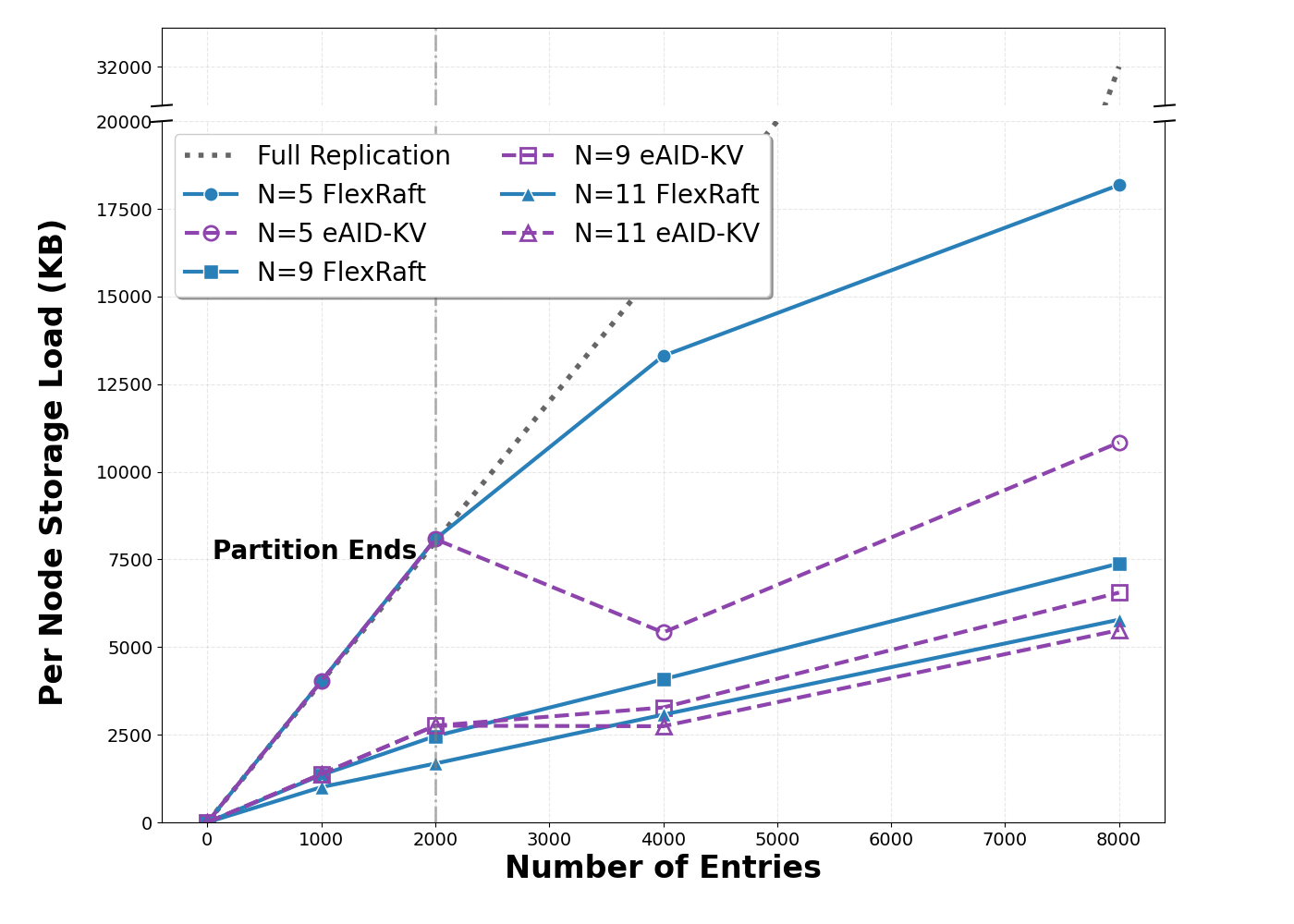}
    \vspace{-14pt}
    \caption{Two partitioned nodes}
    \label{fig:serverpartition:two}
\end{subfigure}%
\hspace{1em} % Adds a small, fixed gap between the figures (adjust 1em as needed)
\begin{subfigure}{0.33\linewidth}
    \centering
    \includegraphics[width=\linewidth, clip]{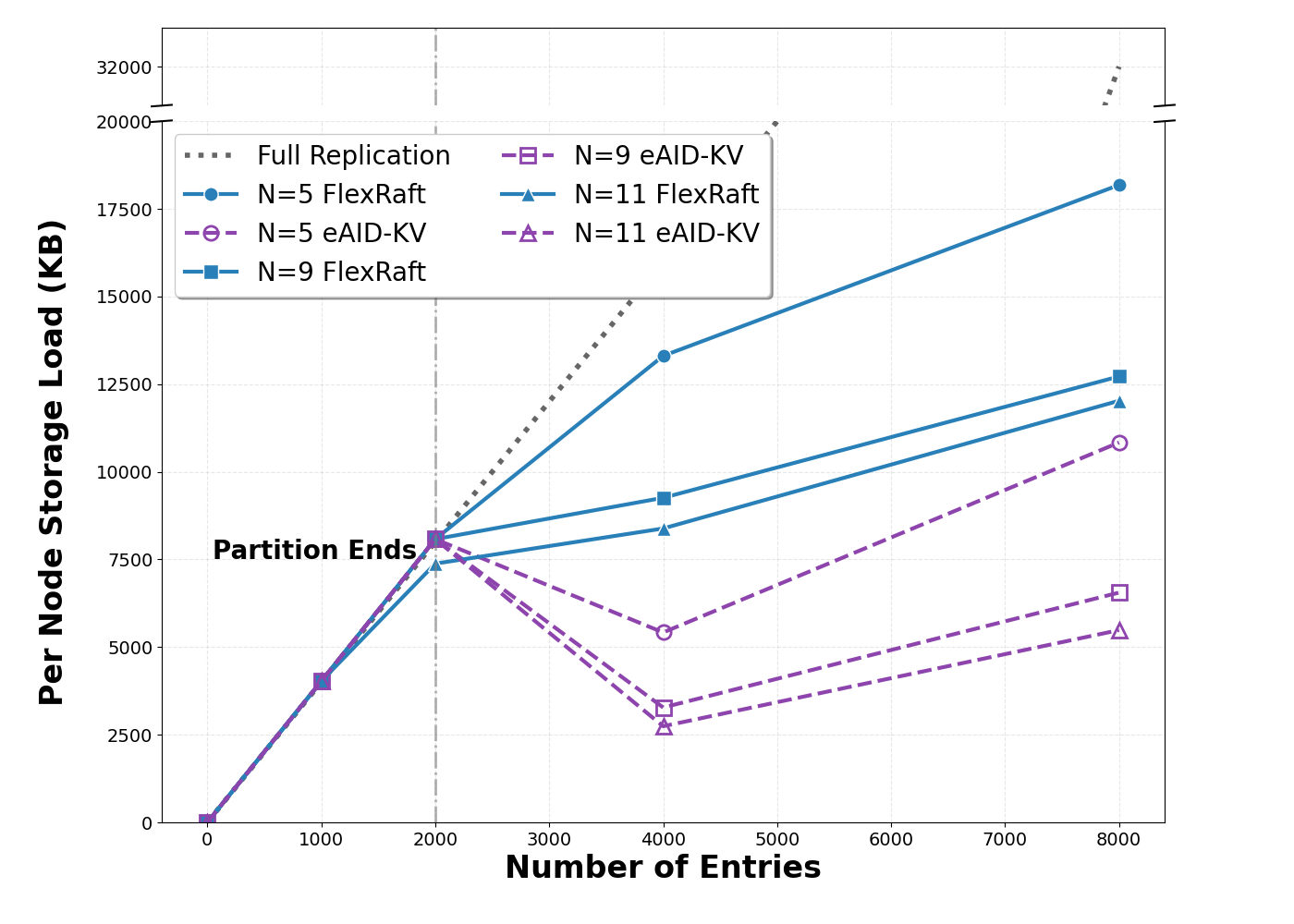}
    \vspace{-14pt}
    \caption{$F$ partitioned nodes}
    \label{fig:serverpartition:max}
\end{subfigure}

\vspace{-10pt}
\caption{Experimental evaluation of storage utilization for \sysConsensus and \FlexRaft under a network partition that recovers at 2000 entries. We also show the storage utilization of Full Replication.}
\label{fig:serverpartition}
\vspace{-3ex}
\end{figure}
Figure \ref{fig:partition:max} presents a similar scenario where $F$ nodes are initially partitioned and rejoin the cluster after 2000 entries. Initially for entries $0-2000$, \sys exhibits the same storage utilization as \FlexRaftIDA as both systems need to store the entire entry to maintain data availability. With only $F+1$ responsive nodes remaining, \sys ensures data availability by storing $F+1$ fragments per entry. However, the system's ability to recover storage is demonstrated once the partition heals. When the leader confirms the $F$ missing nodes have returned and are responsive, \sys performs a retrospective optimization. The storage utilization sharply drops (visible at the $2000\rightarrow3000$ entry mark) as the system reduces the redundancy of prior entries from $F+1$ fragments down to one fragment per node.  

 Figure~\ref{fig:increasingF} extends the analysis to $N=99$ subject to an increasing number of partitioned nodes. A partition occurs between $32{,}000$ entries and $64{,}000$ entries. We observe that the same pattern holds with \sys consistently recovering to optimal storage post-partition 
while \FlexRaftIDA does not. The distinction is that \FlexRaftIDA 
cannot retroactively re-optimize storage after a partition without 
reconstruction and re-encoding, while \sys leverages the distinction 
between crashed and unresponsive nodes to recover the temporary 
overhead it incurs without any re-encoding or reconstruction.

Figure \ref{fig:StorageCost} shows a zoomed in version of Figure \ref{fig:partition:two} utilizing a configuration of $N=11$ and $f=2$. By zooming into the graph we can see the slightly higher storage cost of \sys at the beginning, but see the same benefits of its post commit. This confirms Claim \textbf{C2}: while \sys may temporarily use more storage to handle partitions, it uniquely recovers that space and maintains better storage utilization than \FlexRaftIDA once the cluster stabilizes. 

We  verify that these analytical storage predictions hold for the deployed \sysConsensus system. Figure~\ref{fig:serverpartition} reports the corresponding empirical measurements from \sysConsensus deployed against \FlexRaft, under the same two-node and $F$-node partition scenarios. The empirical curves track the analytical predictions for \sys, confirming that the dispersal behavior survives integration into the \KV.

\subsection{Dispersal Latency of \sys (Analytical)}

Figure \ref{fig:networkHops} illustrates the simulated latency required for each protocol to output \textsc{Done} on an
entry under unstable network conditions. We define latency as the total time it takes from when a leader initiates dissemination, to when it outputs done, excluding leader-election round trips and assuming a steady state leader. To model network instability, we draw the nodes response time from a normal distribution centered at 0.8~ms with a standard deviation of 0.15~ms. We set the leader timeout to be 2 standard deviations above the mean. For \FlexRaftIDA, when the leader does
not receive all $N$ responses before the timeout, we repeatedly re-sample until the number of responses meets or exceeds the
previous sample, reflecting the incremental re-encoding mechanism it employs.
We evaluate \sys using a dissemination strategy where the leader initially disseminates a
conservative two fragments per node. Under this dispersal strategy, the leader is able to
tolerate up to $N/4$ missing responses without retransmission. We assume that entries are uncorrelated, and thus an entry at index 501, does not rely on the entry at index 500 to finish before it can complete. 

The graph tracks performance across 1000 invocations of \textsc{Disperse} to the system, with the random latency distribution re-sampled for each entry. Under these circumstances, the baseline is $\sim$1~ms (yellow/orange) and everything above it (darker red) is retransmission overhead due to the leader not receiving enough responses and timing out. Note that due to the randomness employed, each entry's latency can vary slightly.

\begin{figure}
    \centering
    \vspace{-10pt} % 2. Pulls the image up to remove top whitespace
    \includegraphics[width=0.7\linewidth]{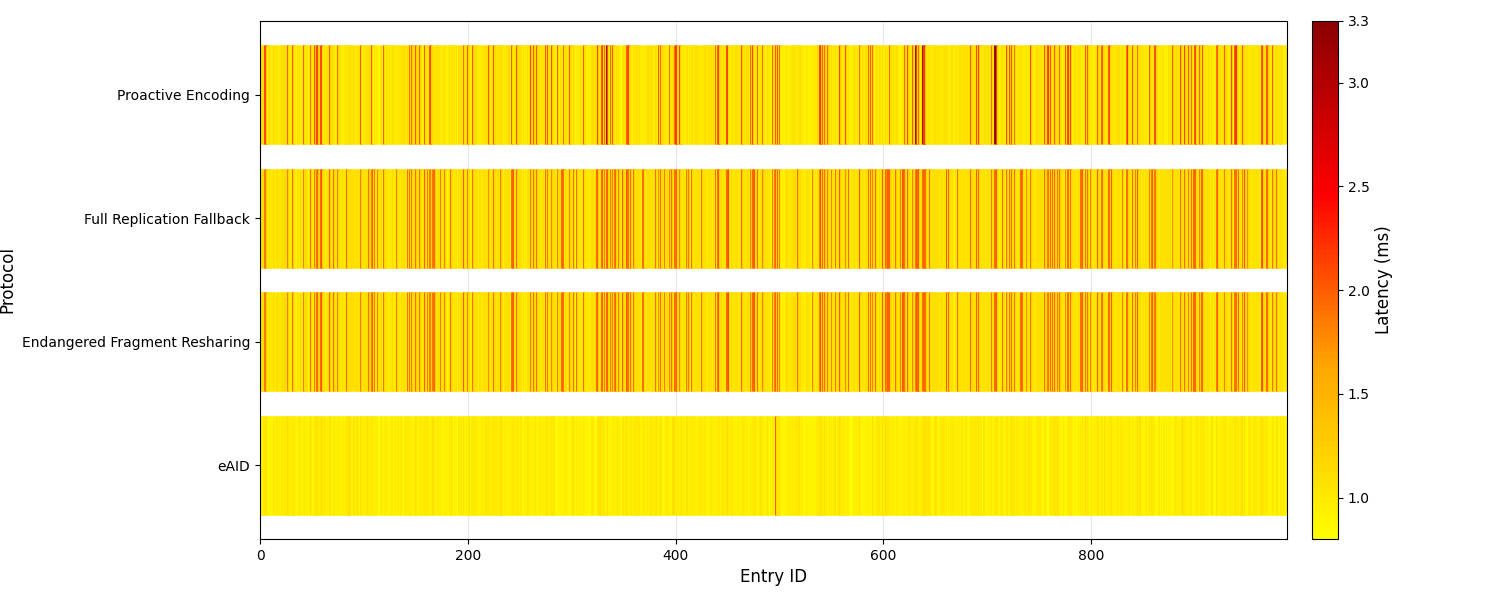} % 3. Set width to \linewidth to fill the wrapper completely
    \vspace{-10pt} % 4. Tighten space between image and caption
    \caption{Analytical evaluation of network latency for Full Replication Fallback, Endangered Fragment Resharing, \FlexRaftIDA, and \sys during unstable network conditions where $N=11$.}
    \label{fig:networkHops}
    \vspace{-15pt} 
\end{figure}

The Full Replication Fallback and Endangered Fragment Resharing 
approaches perform identically, requiring $\sim$2~ms for 22.5\% of 
invocations and $\sim$1~ms for the remainder. \FlexRaftIDA performs 
slightly better, taking $\sim$2~ms for 14.6\% of invocations, but 
its re-encoding approach occasionally triggers a third 
transmission round, pushing latency above 2~ms for 0.4\% of 
invocations. By initially disseminating two fragments per 
node, \sys only requires $\sim$2~ms for less than 1\% of invocations and $\sim$1~ms for the rest. This 
validates Claim \textbf{C3}: \sys outperforms the other approaches 
under unstable network conditions.

\subsection{Key-Value Store over Log Replication (Experimental)}

We deploy both \sysConsensus and \FlexRaft in a 5-node distributed setup and 
evaluate them on YCSB workloads \cite{ycsb} A, B, and C, which span the read/write 
spectrum. Figure~\ref{fig:ycsb-benchmark} reports the average throughput over 5 runs 
as the number of concurrent clients increases from one to four.

Under the write-heavy YCSB-A workload (50\% read / 50\% write), \sysConsensus
consistently outperforms FlexRaft across all client counts, reaching 
$\sim$9.4~Mbps versus $\sim$8.5~Mbps at four clients. This gap stems 
from \sysConsensus's overprovisioned dissemination: it sends extra fragments so it can commit with the fastest 
quorum and prunes the excess post-commit. Since \FlexRaft only sends a single fragment, it must wait on 
every node to safely commit, which stalls reads as well as writes as reads cannot 
return values older than the latest committed write. As the workload shifts to read-dominant 
YCSB-B (95\% read / 5\% write), the gap narrows and both protocols 
converge near 40~Mbps at four clients. Under read-only YCSB-C, the two 
systems are indistinguishable, scaling from $\sim$17~Mbps with one 
client to $\sim$70~Mbps with four. This convergence is expected, as 
there are less writes to bound read progress.

We now evaluate the same setup with only sequential put operations. Since there are no reads to amplify the advantage \sysConsensus offers, the performance is largely the same. 
Figure~\ref{fig:distributed-benchmark} decomposes end-to-end latency
into its two components. The left panel shows end-to-end client
latency, which is dominated by wide-area network round-trip time
and steadily scales as we increase the number of clients. The right panel isolates the
\emph{commit latency} which represents the time spent in the consensus protocol. These results validate \textbf{C4} showing that the overhead of integrating our IDA approach with consensus does not affect performance in the steady state. 
\label{sec:performance}
\begin{figure}
    \centering
    \vspace{-10pt} % 2. Pulls the image up to remove top whitespace
    \includegraphics[width=0.7\linewidth]{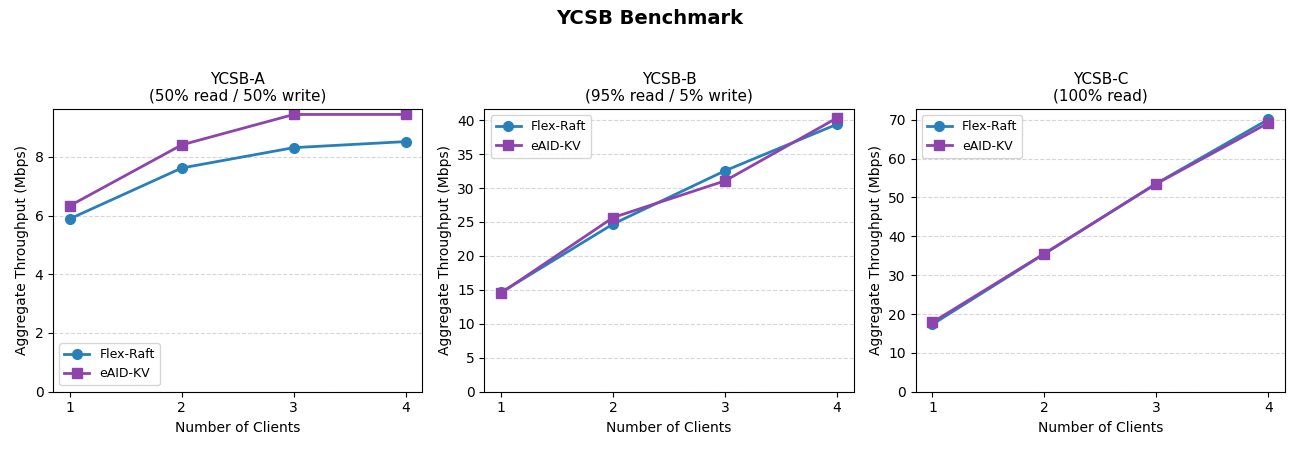} % 3. Set width to \linewidth to fill the wrapper completely
    \vspace{-10pt} % 4. Tighten space between image and caption
    \caption{YCSB benchmark for \FlexRaft vs.\ \sysConsensus}
    \label{fig:ycsb-benchmark}
    \vspace{-5pt} 
\end{figure}

\begin{figure}
    \centering
    \vspace{-10pt} % 2. Pulls the image up to remove top whitespace
    \includegraphics[width=0.7\linewidth]{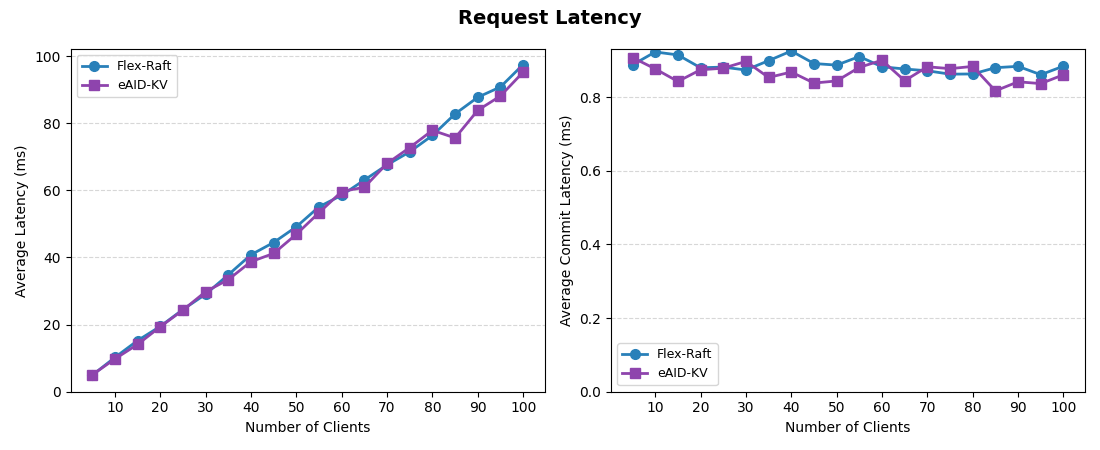} % 3. Set width to \linewidth to fill the wrapper completely
    \vspace{-10pt} % 4. Tighten space between image and caption
    \caption{Distributed benchmark for \FlexRaft vs.\ \sysConsensus}
    \label{fig:distributed-benchmark}
    \vspace{-15pt} 
\end{figure}
\subsection{Summary}
Our evaluation supports all four claims. \sys matches \FlexRaftIDA's storage utilization under normal network conditions (\textbf{C1}) and achieves lower storage post-partition through its autonomous pruning approach (\textbf{C2}). These results carry over to the deployed \sysConsensus system, where the same trends hold experimentally as well. \sys also incurs optimal dispersal latency $\sim$99\% of the time under unstable networks via its adaptive dissemination strategy (\textbf{C3}). Finally, \sysConsensus matches \FlexRaft's end-to-end performance in the steady state (\textbf{C4}), confirming that integrating \sys into a KV store imposes no observable overhead. A useful consequence of this design is that the leader can set aggressive commit timers without stalling on slow nodes: \sysConsensus remains live, with storage utilization converging to the optimal state once slow nodes catch up.

%% file: RelatedWork.tex
\section{Related Work}
\label{sec:RelatedWork}
Replicated \KVs rely on consensus protocols to maintain a consistent state across geographically dispersed nodes. Paxos \cite{Paxos-Simple} established the theoretical foundation for this field, powering systems such as Chubby \cite{Chubby} and Spanner \cite{Spanner}. To address the inherent performance bottlenecks of vanilla Paxos, numerous variants~\cite{WPaxos, EPaxos, PigPaxos, DPaxos, Mencius, RingPaxos} have been proposed. \textbf{Mencius}~\cite{Mencius} utilizes a round-robin approach with rotating leaders to partition the log and balance network load. \textbf{EPaxos}~\cite{EPaxos} adopts a multi-leader design but can only provide a partial ordering (less strict than the total order that other alternatives provide), whereas \textbf{RingPaxos}~\cite{RingPaxos} organizes nodes in a ring topology to exploit high-bandwidth multicast and improve overall throughput. For wide-area deployments, \textbf{WPaxos}~\cite{WPaxos} employs multiple leaders to reduce geographic latency, and \textbf{DPaxos}~\cite{DPaxos} employs dynamic quorums that allow for small replication groups without requiring large, static leader election quorums. Additionally, \textbf{PigPaxos}~\cite{PigPaxos} focuses on the communication layer, using in-network aggregation and piggybacking to decouple the leader's decision-making from message passing.

While the aforementioned works focus on structural optimizations, \textbf{RS-Paxos}~\cite{RS-Paxos} addresses the resource overhead of state machine replication. By integrating erasure coding directly into the Paxos protocol, RS-Paxos replaces traditional full-copy replication with coded fragments. This approach can reduce network and storage costs by over 50\% and significantly improves write throughput by minimizing disk I/O. That said, RS-Paxos introduces a trade-off between fault tolerance and storage efficiency. To tolerate the maximum theoretical number of failures ($F < N/2$), it effectively reverts to full replication, employing erasure coding only when the system is configured to tolerate a strictly lower threshold. Finally, although early storage systems have utilized erasure codes to reduce storage \cite{ErasureCodingAzure}, these systems do not integrate erasure coding with consensus.  

\textbf{Raft}~\cite{Raft} is another widely adopted consensus algorithm designed for understandability, and it is functionally equivalent to Multi-Paxos. Recent research has introduced various optimizations to the original Raft protocol~\cite{ParallelRaft, HoverCraft, RacsSadl}. \textbf{ParallelRaft}~\cite{ParallelRaft} focuses on high concurrency and allows parallel replication of multiple entries as well as out of order execution for non-conflicting commands. To address hardware-level constraints, \textbf{HovercRaft}~\cite{HoverCraft} leverages kernel-bypass techniques and in-network programmability to eliminate CPU and I/O bottlenecks. Finally, \textbf{RACS-SADL}~\cite{RacsSadl} optimizes the performance by separating command dissemination from the core consensus logic. 

\textit{Optimizing with Erasure Coding} Several recent protocols aim to reduce consensus/replication storage and communication cost via the use of erasure coding, including \textbf{\CRaft}~\cite{CRaft}, \textbf{\HRaft}~\cite{HRaft}, \textbf{\FlexRaft}~\cite{FlexRaft}, and \textbf{Crossword}~\cite{hu2025crossword}. We discuss these works in detail in the Background section of the paper (Section~\ref{sec:Background}). Briefly, all of them address only dissemination efficiency and do not support post-dissemination pruning (this has been mentioned as a future challenge~\cite{HRaft}). 
 They administer adjustments to the coding regime only by the leader; and most of them compound the recovery by varying coding regimes across different log entries\footnote{The exception is Crossword, a parallel independent work to \sys employing a $(F+1, (F+1)\times(N-1))$ erasure-code. However, Crossword aims to optimize bandwidth consumption, not storage, and additionally, like all previous works, has no post-dissemination adaptability.}. In contrast, \sys implements post-dissemination pruning, allows each node to locally and autonomously adjust the coding regime per log-entry, and applies a simple and uniform recovery algorithm across all log entries. 
% \Rithwik{Added this sentence, but not sure} 
%\MKR{When you way "this" has been mentioned, what specifically?} \Rithwik{Slightly tangential, but HRaft talks about reducing storage after recovering a failed server}

Additionally, while we focus on crash fault tolerance in our protocol, there are numerous works that target Byzantine failures (e.g., \cite{danezis2025walrusefficientdecentralizedstorage, BeatBFT, ByzantineErasureCodedStorage,HoneyBadgerBFT, DistSimplex, ReduceBandwidthErasureCoding}).

%% file: conclusion.tex
\section{Conclusion}
\label{sec:conclusion}

We presented \sys, an elastic information dispersal algorithm that fixes the recovery
threshold at $F+1$ and varies only the number of fragments per node. \sys decouples dispersal completion from the final storage layout as it completes once
enough responses arrive and prunes surplus fragments in the background, without
coordination. \sysConsensus, the result of integrating \sys into a replicated \KV, matches or exceeds prior systems in the steady state while reclaiming storage after a partition.
Two directions remain open: extending \sys to the Byzantine setting, and exploring
further ways to reduce common-case latency

%% file: Appendix/appendix.tex
\appendix

\input{Appendix/RaftConsensus}

%% file: Appendix/RaftConsensus.tex
\section{Integration with a Replicated Key-Value Store via Log Replication}
\label{sec:kv-integration}

In Section~\ref{sec:use-case} we identified replicated key-value stores 
backed by log replication as the deployment setting for \sys. This 
appendix details that integration. We use Raft~\cite{Raft} as the 
log-replication substrate because it is the standard choice for 
production key-value stores ~\cite{etcd_2021, cockroachdb, tidb} and provides a 
concrete, widely understood target. We first provide an overview of Raft (Appendix \ref{sec:RaftConsensus}), then describe how \sys replaces 
Raft's full-payload \textsc{AppendEntries} with encoded fragment 
dissemination (Appendix \ref{sec:RaftureRaftIntegration}), and finally show 
that this replacement preserves Raft's safety and liveness guarantees 
end-to-end, making \sysConsensus a correct replicated key-value store.

\subsection{Raft Consensus}
\label{sec:RaftConsensus}
\begin{figure}[h!]   
    \centering
    \includegraphics[height=0.6\textheight, keepaspectratio]{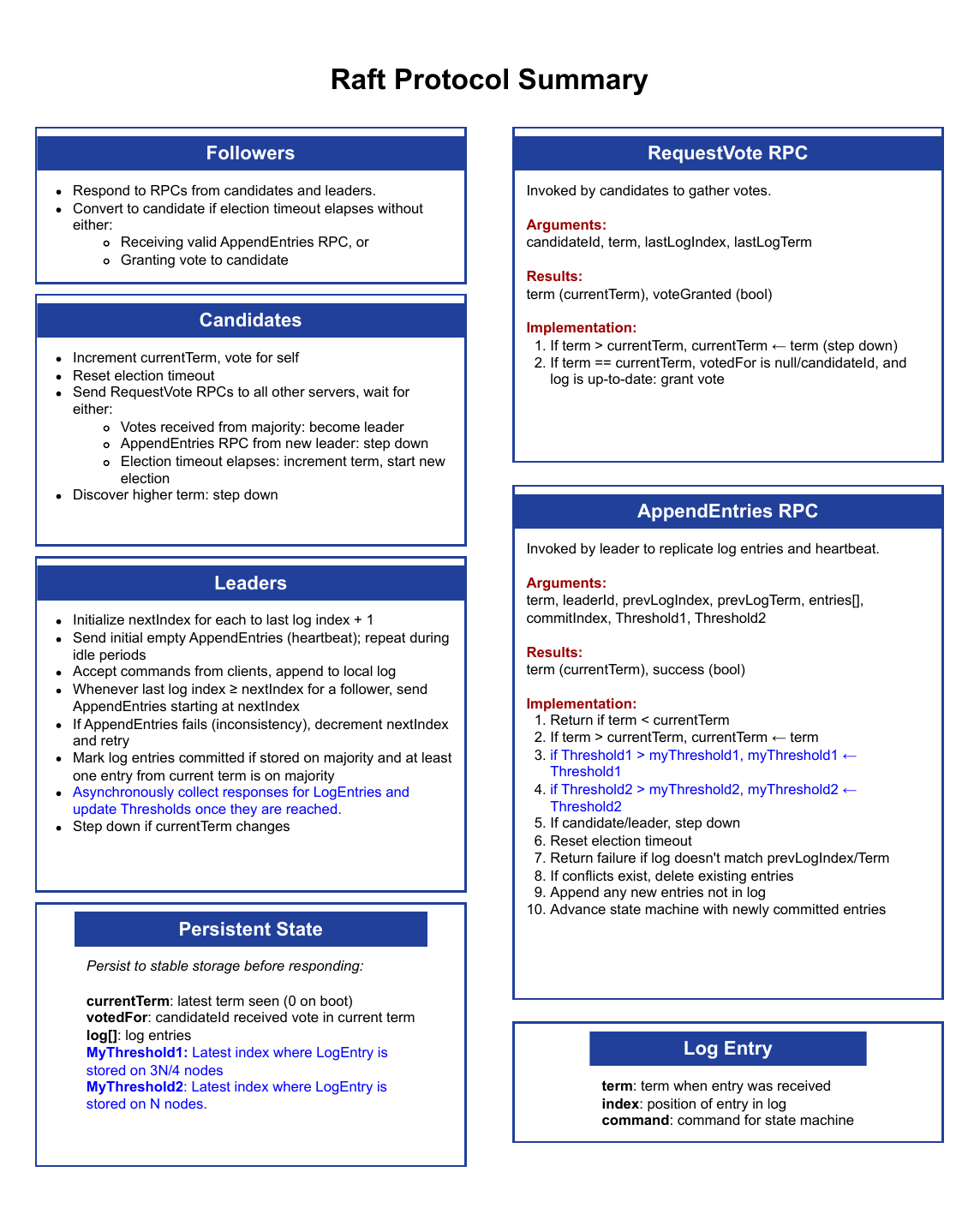}
    \vspace{-1em}
    \caption{Basics of Raft Consensus}
    \label{fig:RaftPseudocode}
\end{figure}

We start by describing the basics of Raft consensus \cite{Raft} which we summarize in Figure \ref{fig:RaftPseudocode}. At a high level, the main goal of Raft consensus is to ensure a fault‑tolerant, consistent replication of a log across a cluster of nodes. A Raft cluster consists of N=2F+1 nodes that can tolerate up to F node failures. At any point of execution, each node in a Raft cluster is categorized as a \textit{leader}, \textit{follower}, or \textit{candidate}. In an ideal point of execution, there is one leader and $N-1$ followers. The leader is in charge of handling all client requests and replicating them to the followers. From each node's perspective, there should be only one viable leader at any point in time. A follower is passive: it only responds to incoming Remote Procedure Calls (RPCs) and does not issue any of its own. Candidates are used to elect a new leader if a majority of nodes believe the current leader has failed. 

Logical time in Raft is divided into \emph{terms} such that there is at most one leader per term and \emph{indices} such that there can be multiple indices per term. Terms are used to identify obsolete information such as log entries from a prior leader that haven't been replicated to a majority of nodes. Indices are used to uniquely identify information within a term such that a (Term, Index) tuple can uniquely identify information in the log. If a follower determines that there is no leader, it changes its state to candidate and sends RequestVote RPCs to all other nodes in the system. The candidate will repeat this until it receives votes from a majority of nodes, in which case it will become the leader. If it receives a response from a valid leader, it will revert to a follower. Lastly, if nobody wins the election, the term is incremented and a new election begins. 

Upon becoming a leader, the node can start responding to client requests. As it receives requests from clients, the leader compiles all of the requests into a LogEntry that also stores meta information such as the term and index. The leader then appends the entry to its log, increments the current index, and sends AppendEntries RPCs to followers. Upon hearing back from a majority, the leader commits the entry, applies the entry to its state machine, and sends a response back to the client. The leader will then notify other nodes of a commit in subsequent RPCs. 

The key component that guarantees Safety in Raft is the log and the rules for electing a leader and adding an entry. For example, a follower will only vote for a candidate if the candidate has a more up-to-date log than it does. Additionally, when the leader sends out AppendEntries, they must include the prior (term, entry) tuple so each follower can verify that its log matches the leader's. Thus if a follower appends a \LogEntry, it will agree on the log with the leader up to and including that entry. 

\subsection{Integration with \sys}
\label{sec:RaftureRaftIntegration}
\input{algorithms/Raft-leader}
\input{algorithms/Raft-follower}

The integration with Raft is structurally similar to the standalone IDA,
with one key difference exploited from Raft's sequential log. Because Raft
guarantees that a follower acknowledging \LogEntry at index $i$ must also
have acknowledged all entries at indices $i' < i$, the leader does not
need to broadcast a per-entry ack count. Instead, it maintains two
monotonically advancing thresholds, $T_1$ and $T_2$: the highest index
acknowledged by at least $\tfrac{3N}{4}$ followers, and the highest index
acknowledged by all $N$ followers, respectively. These thresholds are
piggybacked on subsequent \textsc{AppendEntries} and heartbeat messages,
allowing followers to determine the required fragment count for any prior
entry from $T_1$ and $T_2$ alone (Algorithm~\ref{alg:rafture-follower}).
This compresses the cluster-wide ack-count broadcast of the IDA layer
(Algorithm~\ref{alg:ida-storage}) into two integers, eliminating per-entry
metadata while preserving the pruning guarantees of
Lemma~\ref{lem:pruning-safety}.

We now describe how this integration manifests in three concrete modifications to Raft. 
\paragraph{LogEntry Modifications:} We extend the \texttt{LogEntry} structure to include two new fields: \textbf{Threshold~1} and \textbf{Threshold~2}. Both are tuples in the form $(\mathit{Term}, \mathit{Index})$. \textbf{Threshold~1} represents the highest $(\mathit{Term}, \mathit{Index})$ for which the leader has received responses from $3N/4$ nodes. Similarly, \textbf{Threshold~2} represents the highest $(\mathit{Term}, \mathit{Index})$ for which the leader has received responses from all $N$ nodes.

\paragraph{Log Replication and Commitment:} In \sysConsensus, log replication follows the standard Raft structure but utilizes erasure coding as described in Section~\ref{sec:Solution}.

\paragraph{Follower Processing and Storage Pruning:} Upon receiving an \texttt{AppendEntries} message, the follower node initially persists all associated fragments. To optimize storage usage, it subsequently prunes fragments based on the global resilience indicated by the leader's \textbf{Threshold} values. The pruning strategy maintains a safety invariant: the cluster must always retain sufficient fragments to reconstruct a \LogEntry despite $F < N/2$ failures. As shown in Section \ref{sec:proof}, it is safe for a follower to retain two fragments for all entries indexed until \textbf{Threshold 1}, and a single fragment for all entries indexed until \textbf{Threshold 2}.

\subsection{Safety and Liveness }
We prove Safety and Liveness under the \textbf{Partial Synchrony Model} with $F<N/2$ benign failures. The model functions similarly to the asynchronous model, except there is an unknown Global Stabilization Time (GST), after which all message transmissions have a known bound.
\begin{theorem}[Safety of Raft with \sys]
\label{thm:raft-safety}
Replacing Raft's full-replication \textsc{AppendEntries} payload with
\sys's encoded fragments preserves all five Raft safety invariants.
\end{theorem}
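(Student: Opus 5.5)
The plan is a refinement argument. Raft's safety proof never uses the contents of an entry's payload. It uses only the metadata carried by \textsc{AppendEntries} (term, index, prevLogIndex/prevLogTerm, leaderCommit), the vote rules, and the fact that a committed entry is durably present on a majority of logs. So I would show that \sysConsensus is a refinement of Raft: every execution of \sysConsensus maps to a valid Raft execution. The map replaces each follower's stored fragments for index $i$ by the full entry $M_i$. The five invariants (Election Safety, Leader Append-Only, Log Matching, Leader Completeness, State Machine Safety) then transfer, provided two conditions hold. First, the control metadata is unchanged. Second, whenever Raft's argument requires an entry to \emph{exist}, the corresponding $M_i$ is actually reconstructable from the cluster.

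The steps, in order, are as follows.

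\begin{enumerate}
\item Observe that \sys changes only the payload field. The term and index of every \texttt{LogEntry}, and the consistency check on $(\mathit{prevLogTerm}, \mathit{prevLogIndex})$, are sent unencoded with every fragment message. So the log-shape state of every node (the sequence of (term, index) pairs) evolves exactly as in Raft.
\item From step 1 I get Election Safety and Leader Append-Only at once, since both depend only on votes and terms. Log Matching also follows: it is an induction over \textsc{AppendEntries} acceptances using only (term, index) equality, and within a term the leader assigns one payload per index, so equal (term, index) implies the same $M$, hence fragments from the same encoding.
\item Redefine commitment. The leader commits index $i$ only when \textsc{Disperse}$(M_i)$ returns \textsc{Done}, i.e.\ $|A|\ge F+t'\ge F+1$, which is a majority. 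Raft's quorum-intersection argument for Leader Completeness therefore still applies at the level of log shape: any later candidate's majority of voters includes a node whose log contains (term, index) of the committed entry, so the elected leader's log contains that entry's identifier.
\item Close the gap that the new leader holds only fragments, not $M_i$. By Lemma~\ref{lem:dispersal-safety}, $M_i$ is reconstructable from the fragments on surviving nodes despite up to $F$ crashes. By Lemma~\ref{lem:pruning-safety}, this continues to hold after arbitrary autonomous pruning driven by $T_1$ and $T_2$. The follower rule (two fragments up to $T_1$, one up to $T_2$) is an instance of $r(|Q|)$ with $|Q|\ge 3N/4$ and $|Q|=N$ respectively. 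So whenever the new leader must apply or serve index $i$, it can collect $F+1$ fragments from responsive nodes and decode the unique $M_i$.
\item State Machine Safety then follows as in Raft: every node applies, at index $i$, the decoding of fragments of the same $M_i$, and MDS decoding is deterministic.
\end{enumerate}

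The main obstacle is step 4 combined with leader changes. Fragments for an index could, in principle, come from two different leaders' encodings with the same index but different terms. Also, a new leader that truncates a follower's conflicting suffix, or re-disperses, must not make a committed entry undecodable. I would handle this as follows. Tag every fragment with (term, index). Pruning only removes fragments, never mixes them, so decoding only ever combines fragments sharing a (term, index) tag, and by step 2 those all belong to one encoding. Truncation by the new leader only removes entries above the matching point, and by Leader Completeness those are never committed. Re-dispersal of an entry a new leader inherits I would treat as a fresh \textsc{Disperse} of the same $M_i$ under the same tag, which only adds fragments. One further point needs checking: that $T_1$ and $T_2$ computed by one leader remain sound under a later leader. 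The thresholds certify historical acknowledgment counts that only grow, so the monotonicity argument in Lemma~\ref{lem:pruning-safety} covers them.
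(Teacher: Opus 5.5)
Your proposal is correct and takes essentially the same route as the paper. Unchanged term/index metadata and voting rules give Election Safety, Leader Append-Only and Log Matching; majority commit plus the Dispersal Safety and Pruning Safety lemmas give Leader Completeness with a reconstructable payload; and deterministic MDS decoding gives State Machine Safety. Your handling of $(\mathit{term},\mathit{index})$ tags, truncation and cross-leader thresholds is extra care the paper omits.

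One soft spot is shared with the paper, though it does not affect the stated invariants. You assert rather than derive that the $T_1$/$T_2$ follower rule is an instance of $r(|Q|)$ for every entry up to the threshold index, but a follower that acknowledges index $T_1$ need not hold two fragments of earlier entries. This touches only reconstructability (your second condition). The five invariants already follow from the $(\mathit{term},\mathit{index})$ log shape together with deterministic decoding of identically tagged fragments.
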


\begin{proof}
\sys modifies only the \emph{payload} of \textsc{AppendEntries} (full
log entries become Reed-Solomon fragments) and the \emph{commit
threshold check}. It does not alter term progression,
voting rules, log indexing, or the leader-election protocol. We address
each invariant.

\emph{Election safety} and \emph{leader append-only} depend solely on
term progression and leader-local log management, neither of which is
modified by \sys. They hold by Raft's original argument.

\emph{Log matching} requires that two logs containing entries with the
same (term, index) agree on all prior entries. \sys preserves the
$(\mathit{term}, \mathit{index}, \mathit{prevLogIndex},
\mathit{prevLogTerm})$ tuple in \textsc{AppendEntries} unchanged
(Algorithm~\ref{alg:rafture-leader}); the consistency check that
followers perform on $\mathit{prevLogIndex}$/$\mathit{prevLogTerm}$
operates on the index/term metadata, not on the fragment payload.
A follower that accepts a fragment for entry $m$ has verified the
prefix at $m-1$ matches the leader's, exactly as in vanilla Raft.

\emph{Leader completeness} requires that any committed entry appears in
the log of every subsequent leader. Raft enforces this via the voting
rule: a candidate must have a log at least as up-to-date as a majority
of voters. In \sys, an entry is committed once the leader collects
$F + t' \ge F + 1$ acks (i.e., a majority). By
Lemma~\ref{lem:dispersal-safety}, at the commit point $M$ is
reconstructable from the cluster's surviving fragments. By
Lemma~\ref{lem:pruning-safety}, this reconstructability is preserved
through arbitrary pruning. Since a future leader is elected from a
majority overlap with the committing majority, and at least one node
in this overlap holds at least one fragment of $M$, the new leader can
collect $F+1$ fragments from the cluster (via \textsc{Reconstruct},
Algorithm~\ref{alg:rafture-follower}) and reconstruct $M$. The leader
then includes $M$ in its log before serving new appends, satisfying
leader completeness.

\emph{State machine safety} follows from leader completeness and
log matching: every node that applies entry $m$ does so against an
identical reconstructed $M$, since reconstruction is deterministic
from any $F+1$ fragments (a property of MDS codes). \qedhere
\end{proof}

\begin{theorem}[Liveness of Raft with \sys]
\label{thm:raft-liveness}
Under partial synchrony with a stable leader after GST, every client
command submitted to the leader is eventually committed.
\end{theorem}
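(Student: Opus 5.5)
Our plan is to reduce liveness of Raft with \sys to two ingredients: Raft's own liveness argument under partial synchrony, which \sys leaves untouched, and Lemma~\ref{lem:ida-termination}, which says each invocation of \textsc{Disperse} by a correct leader eventually returns \textsc{Done}. As in the proof of Theorem~\ref{thm:raft-safety}, \sys modifies only the \textsc{AppendEntries} payload and the commit-threshold check. Leader election, terms, and heartbeats are unchanged.

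We therefore first recall Raft's standard argument. After GST, randomized election timeouts and bounded message delays ensure that eventually a single leader $L$ is elected and remains stable, since heartbeats arrive before followers time out. The theorem assumes such a stable leader; we will only need to note that \sys's \textsc{Reconstruct} step for uncommitted-but-dispersed prefix entries (used in the proof of leader completeness) terminates. It does, because after GST at least $F+1$ correct nodes respond, and by Lemmas~\ref{lem:dispersal-safety} and~\ref{lem:pruning-safety} they collectively hold at least $F+1$ fragments of every committed entry. So the new leader finishes reconstruction in bounded time and begins serving appends.

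Next, fix a client command $c$ submitted to $L$. $L$ places $c$ in a log entry $M$ at some index $i$ and invokes \textsc{Disperse}$(M)$ through \textsc{AppendEntries}. We must also cover followers whose logs lag or diverge. Raft's usual backtracking on $\mathit{prevLogIndex}/\mathit{prevLogTerm}$ is unchanged and, after GST, brings every correct follower's prefix into agreement with $L$ within finitely many round trips. Once that holds, the fragment messages for index $i$ are accepted, and followers acknowledge with their fragment counts exactly as in Algorithm~\ref{alg:ida-storage}. We can then apply Lemma~\ref{lem:ida-termination}. Its argument relies only on eventual delivery between correct nodes and on the existence of $F+1$ correct nodes, and partial synchrony gives both. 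So the loop in Algorithm~\ref{alg:ida-dispersal} exits with $|A[M.\mathit{id}]| \ge F+t' \ge F+1$, a majority, and $L$ advances its commit index to $i$. Because acks are cumulative in Raft (acknowledging $i$ implies acknowledging every $i'<i$), committing $i$ also commits all earlier entries of the current term, and thus earlier-term entries under Raft's usual current-term commit rule. Hence $c$ is committed.

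We expect the main obstacle to be the interplay between retransmission and log backtracking. Lemma~\ref{lem:ida-termination} is stated for a standalone message, but in Raft a follower rejects fragments for index $i$ until its prefix matches. We would handle this by arguing that rejection is only transient after GST. Backtracking strictly decreases the mismatch point each round, so after finitely many rounds every correct follower accepts index $i$, from which point the lemma's hypotheses hold verbatim. A secondary subtlety is that pruning never removes fragments the leader still counts toward $A$. Pruning is triggered only by \textsc{AckUpdate} counts, which are at least the dispersal threshold, so Lemma~\ref{lem:pruning-safety} guarantees it cannot invalidate progress.
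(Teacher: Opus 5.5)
Your proposal takes the same route as the paper. Raft's election, term and commit machinery is untouched, per-entry progress comes from Lemma~\ref{lem:ida-termination} yielding $F+t'\ge F+1$ acknowledgments (a Raft majority), and pruning is argued not to gate progress; your treatment of log backtracking and of \textsc{Reconstruct} only spells out what the paper folds into ``the remainder of Raft's liveness argument.'' Be aware that you inherit the paper's one unchecked step, applying Lemma~\ref{lem:ida-termination} (proved for Algorithm~\ref{alg:ida-dispersal}) to the tiered loop of Algorithm~\ref{alg:rafture-leader}, where it does not hold as written: the leader never sends itself extra fragments, so for $N=5$ with two crashed followers, once the loop steps down to the $(3,3)$ tier the leader still holds one or two fragments, only the two live followers can qualify, and $|A[m]|\le 2<3$ forever; moreover \textsc{UpdateThresholds} runs inside the wait loop, so $T_1$ can reach $m$ and followers can prune $m$ before it commits, contradicting both your \textsc{AckUpdate}-based pruning argument and the paper's claim that pruning happens only after commit.
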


\begin{proof}
Raft's liveness rests on three operational guarantees from its
replication layer: (a) the leader can issue \textsc{AppendEntries}
to followers and receive responses in bounded time after GST;
(b) \textsc{AppendEntries} to a majority of correct followers
eventually succeeds; (c) the leader can determine when an entry is
replicated on a majority. We show \sys preserves all three.

\emph{(a) Bounded response time.} \sys's \textsc{AppendEntries} carries
fragments of size $B/(F+1)$ rather than the full entry of size $B$, and message delivery bounds after GST are preserved.

\emph{(b) Eventual success.} The wait loop inside \textsc{AppendEntry}
(Algorithm~\ref{alg:rafture-leader}) is the Raft-integrated counterpart
to \textsc{Disperse}'s wait loop, and the same termination argument
applies. By Lemma~\ref{lem:ida-termination}, after GST the loop
collects $F + t' \ge F + 1$ acks within bounded time. The retransmission
mechanism converges (Lemma~\ref{lem:ida-termination}) and does not
introduce unbounded waiting.

\emph{(c) Quorum certification.} \sys commits an entry when
$|A[m]| \ge F + t'$, where $F + t' \ge F + 1$ is precisely Raft's
majority threshold. The leader treats \textsc{Dispersed} (i.e., the
return of \textsc{AppendEntry}) as the commit-eligibility signal,
preserving Raft's majority-quorum invariant.

The remainder of Raft's liveness argument --- election timeouts, leader
stability after GST, and progress through the log --- is unchanged by
\sys, since \sys modifies only how a single log entry's payload is
replicated, not the Raft control flow (elections, term progression,
commit-index advancement). Therefore Raft's liveness proof carries
through with \textsc{AppendEntries} replaced by \sys's encoded variant.

Pruning (Lemma~\ref{lem:pruning-safety}) and asynchronous late-ack
collection (\textsc{OnLateAppendEntriesResponse},
Algorithm~\ref{alg:rafture-leader}) occur after a log entry is
committed and do not gate progress on subsequent entries; they affect
only the storage footprint of already-committed entries and therefore
cannot impact liveness. \qedhere
\end{proof}

%% file: algorithms/Raft-leader.tex
\begin{algorithm}[!htbp]
\caption{\sysConsensus: Leader Adaptive Dissemination}
\label{alg:rafture-leader}
\footnotesize
\begin{algorithmic}[1]
\State \textbf{State:} $\mathit{currentTerm}, \mathit{commitIndex}$ (Raft);
       $\mathit{respEst} \gets N$;
       $A[m]$: ack sets;
       $\mathit{sent}[m, j], \mathit{held}[m, j]$: \# frags sent / acked;
       $f_{\text{per}}[m]$: committed per-node threshold for entry $m$;
       $T_1, T_2$: highest indices acked by $\lceil 3N/4 \rceil, N$ followers
\State \textbf{Tiers:} $\mathcal{T} = [(N, 1), (\lceil 3N/4 \rceil, 2), (F{+}1, F{+}1)]$
       \Comment{$\mathrm{Tier}\,1, \mathrm{Tier}\,2, \mathrm{Tier}\,3$ in list order}
\Statex \hspace{1.2em} \Call{Tier}{$q$} returns $(w, f_{\text{per}}) \in \mathcal{T}$ with the largest $w \leq q$
\Statex
\Procedure{AppendEntry}{$M$, index $m$}
    \State $\mathcal{F} \gets$ \Call{Encode}{$M$}, tagged $(\mathit{currentTerm}, m)$;
           $\;(w, f_{\text{per}}) \gets$ \Call{Tier}{$\max(F{+}1, \mathit{respEst} - \delta)$}
    \State Partition $\mathcal{F}$ into $S_1, \ldots, S_N$, $|S_j| = F{+}1$;
           $\;A[m] \gets \{\mathit{leaderId}\}$
    \State Locally store $S_{\mathit{leaderId}}[1{:}f_{\text{per}}]$;
           $\;\mathit{held}[m, \mathit{leaderId}] \gets f_{\text{per}}$
    \For{each follower $j$}
        \State Send \textsc{AppendEntries}$(\mathit{currentTerm}, \mathit{leaderId}, m{-}1, \log[m{-}1].\mathit{term},$
        \Statex \hspace{3em} $\langle (M, S_j[1{:}f_{\text{per}}], \mathit{currentTerm}, m) \rangle, \mathit{commitIndex}, T_1, T_2)$ to $j$
        \State $\mathit{sent}[m, j] \gets f_{\text{per}}$;
               $\;\mathit{held}[m, j] \gets 0$
    \EndFor
    \Repeat
        \State Start timer $\tau$
        \While{$|A[m]| < w$ \textbf{and} timer not expired}
            \State Wait for \textsc{AppendEntriesResponse}$(j, \mathit{term}_j, \mathit{success}, \mathit{held}_j)$
            \If{$\mathit{term}_j > \mathit{currentTerm}$} step down; \textbf{return} \EndIf
            \If{$\mathit{success}$ \textbf{and} $\mathit{held}_j > \mathit{held}[m, j]$}
                $\mathit{held}[m, j] \gets \mathit{held}_j$ \EndIf
            \If{$\mathit{held}[m, j] \geq f_{\text{per}}$}
                $A[m] \gets A[m] \cup \{j\}$; \Call{UpdateThresholds}{$m$} \EndIf
        \EndWhile
        \If{$|A[m]| \geq w$} \textbf{break} \EndIf
        \State $(w, f_{\text{new}}) \gets$ \Call{Tier}{$\max(F{+}1, |A[m]|)$}
               \Comment{timeout: step down a tier}
        \For{each follower $j$ \textbf{with} $f_{\text{new}} > \mathit{sent}[m, j]$}
            \State Send \textsc{AppendEntries}$(\mathit{currentTerm}, \mathit{leaderId}, m{-}1, \log[m{-}1].\mathit{term},$
            \Statex \hspace{3em} $\langle (M, S_j[\mathit{sent}[m, j]{+}1{:}f_{\text{new}}], \mathit{currentTerm}, m) \rangle, \mathit{commitIndex}, T_1, T_2)$ to $j$
            \State $\mathit{sent}[m, j] \gets f_{\text{new}}$
        \EndFor
        \State $f_{\text{per}} \gets f_{\text{new}}$;
               $\;A[m] \gets \{ j \mid \mathit{held}[m, j] \geq f_{\text{per}} \}$
    \Until{false}
    \State $f_{\text{per}}[m] \gets f_{\text{per}}$
           \Comment{freeze per-entry threshold for late-ack accounting}
    \State $\mathit{respEst} \gets |A[m]|$; \Call{Commit}{$m$}
\EndProcedure
\Statex
\Procedure{OnLateAppendEntriesResponse}{$j, m, \mathit{held}_j$}
    \If{$\mathit{held}_j > \mathit{held}[m, j]$}
        \State $\mathit{held}[m, j] \gets \mathit{held}_j$
        \If{$j \notin A[m]$ \textbf{and} $\mathit{held}_j \geq f_{\text{per}}[m]$}
               \Comment{same threshold as the main loop ($f_{\text{per}}$ for entry $m$)}
            \State $A[m] \gets A[m] \cup \{j\}$; \Call{UpdateThresholds}{$m$}
        \EndIf
    \EndIf
\EndProcedure
\Statex
\Procedure{UpdateThresholds}{$m$}
    \If{$|A[m]| \geq \lceil 3N/4 \rceil$ \textbf{and} $m > T_1$} $T_1 \gets m$ \EndIf
    \If{$|A[m]| = N$ \textbf{and} $m > T_2$} $T_2 \gets m$ \EndIf
\EndProcedure
\end{algorithmic}
\end{algorithm}

%% file: algorithms/Raft-follower.tex
\begin{algorithm}[!htbp]
\caption{\sysConsensus: Follower}
\label{alg:rafture-follower}
\footnotesize
\begin{algorithmic}[1]
\State \textbf{State:} $\log[\,]$: local log; each entry stores fragments, term, index
\State $\mathit{currentTerm}, \mathit{commitIndex}$: standard Raft state
\State $\mathit{myT}_1, \mathit{myT}_2$: locally known pruning thresholds
\Statex
\Procedure{OnAppendEntries}{$\mathit{leaderId}$, $\mathit{term}$, $\mathit{prevLogIndex}$, $\mathit{prevLogTerm}$, $\mathit{entries}$, $\mathit{leaderCommit}$, $T_1^{\mathit{ldr}}$, $T_2^{\mathit{ldr}}$}
    \If{$\mathit{term} < \mathit{currentTerm}$}
        \State Send \textsc{AppendEntriesResponse}$(\mathit{self}, \mathit{currentTerm}, \mathit{false}, 0)$ to leaderId;
               \textbf{return} \Comment{(1) stale leader}
    \EndIf
    \If{$\log[\mathit{prevLogIndex}]$ does not exist \textbf{or} $\log[\mathit{prevLogIndex}].\mathit{term} \neq \mathit{prevLogTerm}$}
        \State Send \textsc{AppendEntriesResponse}$(\mathit{self}, \mathit{currentTerm}, \mathit{false}, 0)$ to leaderId;
               \textbf{return} \Comment{(2) log mismatch}
    \EndIf
    \For{each entry $e \in \mathit{entries}$}
        \If{$\log[e.\mathit{index}]$ exists \textbf{and} $\log[e.\mathit{index}].\mathit{term} \neq e.\mathit{term}$}
            \State Truncate $\log$ from $e.\mathit{index}$ onward \Comment{(3) conflicting entry}
        \EndIf
        \If{any fragment in $e$ is not tagged $(e.\mathit{term}, e.\mathit{index})$}
            \State Send \textsc{AppendEntriesResponse}$(\mathit{self}, \mathit{currentTerm}, \mathit{false}, 0)$ to leaderId; \textbf{return}
        \EndIf
        \State Persist $e$'s fragments at $\log[e.\mathit{index}]$ \Comment{(4) append}
    \EndFor
    \If{$\mathit{leaderCommit} > \mathit{commitIndex}$}
        \State $\mathit{commitIndex} \gets \min(\mathit{leaderCommit}, \text{index of last new entry})$ \Comment{(5)}
    \EndIf
    \State $\mathit{myT}_1 \gets \max(\mathit{myT}_1, T_1^{\mathit{ldr}})$;
           $\;\mathit{myT}_2 \gets \max(\mathit{myT}_2, T_2^{\mathit{ldr}})$
    \State \Call{Prune}{} \Comment{opportunistic, runs in background}
    \For{each entry $e \in \mathit{entries}$}
        \State Send \textsc{AppendEntriesResponse}$(\mathit{self}, \mathit{currentTerm}, \mathit{true}, |\log[e.\mathit{index}].\mathit{fragments}|)$ to leaderId
    \EndFor
\EndProcedure
\Statex
\Procedure{Prune}{}
    \For{each entry $m$ in $\log$}
        \If{$m \leq \mathit{myT}_2$}
            \State Retain 1 fragment; discard the rest \Comment{Tier 1: all $N$ nodes hold a fragment}
        \ElsIf{$m \leq \mathit{myT}_1$}
            \State Retain 2 fragments; discard the rest \Comment{Tier 2: $\geq \lceil 3N/4 \rceil$ nodes hold fragments}
        \Else
            \State Retain $F{+}1$ fragments \Comment{Tier 3: bare quorum}
        \EndIf
    \EndFor
\EndProcedure
\Statex
\Procedure{Reconstruct}{entry index $m$}
    \State Collect any $F{+}1$ distinct fragments for $m$ from the cluster, all tagged $(\log[m].\mathit{term}, m)$
    \State Decode $M$ via Reed-Solomon interpolation \Comment{same procedure for every entry}
    \State \textbf{return} $M$
\EndProcedure
\end{algorithmic}
\end{algorithm}